\documentclass[sigconf]{acmart} 
\AtBeginDocument{%
  }

\usepackage{amsmath}
\usepackage{xurl}
\usepackage{hyperref}
\usepackage{tablefootnote}
\usepackage{algorithm}
\usepackage{algorithmicx}
\usepackage[noend]{algpseudocode}
\usepackage{mathtools}
\usepackage{caption}
\usepackage{balance}
\usepackage{bbm}
\usepackage{multirow}
\usepackage{tikz}
\usepackage[framemethod=TikZ]{mdframed}
\usepackage{xspace}
\usepackage{enumitem}
\usepackage{xcolor}
\usepackage{float}
\usepackage{tablefootnote}
\usepackage{tabularx}
\usepackage{booktabs}
\definecolor{ao}{rgb}{0.0, 0.5, 0.0}
\usepackage{listings,multicol}  
\usepackage{graphicx}
\usepackage{subcaption}
\usepackage{float} 

\usepackage{courier}
\usepackage{adjustbox}
\usepackage[most]{tcolorbox}

\newcommand{\sys}{\textsc{Hoss}\xspace}
\newcommand{\base}{\textsc{Compass}\xspace}
\newcommand{\pmm}{\textsf{Pmem}\xspace}

\newcommand{\eat}[1]{}

\usepackage{subcaption}

\newtoggle{notes}
\toggletrue{notes} 

\newcommand{\re}[1]{{\color{black} #1}}

\definecolor{codebg}{HTML}{F7F7F9}
\definecolor{cm}{RGB}{0,128,0}     
\definecolor{kw}{RGB}{0,0,160}     

\lstdefinestyle{osdi}{
  language=C++,
  basicstyle=\ttfamily\footnotesize,
  numbers=left, numberstyle=\tiny\color{gray},
  xleftmargin=2em, framexleftmargin=2em,
  frame=single, rulecolor=\color{black!15},
  backgroundcolor=\color{codebg},
  showstringspaces=false, columns=fullflexible, keepspaces=true,
  keywordstyle=\color{kw}\bfseries,
  commentstyle=\color{cm}\itshape,
  tabsize=2, upquote=true,
  morekeywords={
    View,filterAndConditions,select,groupBy,count,sort,
    filter,agg,orderBy,asc,desc,col,lit,as
  }
}
\makeatletter
\DeclareFontFamily{OMX}{MnSymbolE}{}
\DeclareSymbolFont{MnLargeSymbols}{OMX}{MnSymbolE}{m}{n}
\SetSymbolFont{MnLargeSymbols}{bold}{OMX}{MnSymbolE}{b}{n}
\DeclareFontShape{OMX}{MnSymbolE}{m}{n}{
    <-6>  MnSymbolE5
   <6-7>  MnSymbolE6
   <7-8>  MnSymbolE7
   <8-9>  MnSymbolE8
   <9-10> MnSymbolE9
  <10-12> MnSymbolE10
  <12->   MnSymbolE12
}{}
\DeclareFontShape{OMX}{MnSymbolE}{b}{n}{
    <-6>  MnSymbolE-Bold5
   <6-7>  MnSymbolE-Bold6
   <7-8>  MnSymbolE-Bold7
   <8-9>  MnSymbolE-Bold8
   <9-10> MnSymbolE-Bold9
  <10-12> MnSymbolE-Bold10
  <12->   MnSymbolE-Bold12
}{}

\let\llangle\@undefined
\let\rrangle\@undefined
\DeclareMathDelimiter{\llangle}{\mathopen}%
                     {MnLargeSymbols}{'164}{MnLargeSymbols}{'164}
\DeclareMathDelimiter{\rrangle}{\mathclose}%
                     {MnLargeSymbols}{'171}{MnLargeSymbols}{'171}
\makeatother

\newcounter{theo}[section] 
\renewcommand{\thetheo}{\arabic{section}.\arabic{theo}}

\newcounter{lekg}[section] 
\newcommand{\thelkg}{\arabic{section}.\arabic{lekg}}

\newcounter{prot}[section] 
\renewcommand{\theprot}{\arabic{section}.\arabic{prot}}

\setcopyright{acmlicensed} 
\copyrightyear{2018} 
\acmYear{2018} 
\acmDOI{XXXXXXX.XXXXXXX} 
\acmConference[CCS '26]{Make sure to enter the correct
  conference title from your rights confirmation email}{November 15-19,2026}{Netherlands}  
\acmISBN{978-1-4503-XXXX-X/2018/06}  

\begin{document}

\title{\sys: Fast Oblivious Semantic Search with Heterogeneous GPU-CPU-TEE Architecture} 


\setcopyright{none}
\settopmatter{printacmref=false}
\renewcommand\footnotetextcopyrightpermission[1]{}
\settopmatter{authorsperrow=4}
\author{Jianzhang Du}
\authornote{The first two authors contributed equally to this research.}

\author{Weijie Huang}
\authornotemark[1]

\author{Chenghong Wang}

\affiliation{%
  \institution{Indiana University}
  \country{}
}
\email{{du5,wh25,cw166}@iu.edu}

\author{Nicolas Tsagareli}
\email{ntsagareli@binghamton.edu}

\author{Yukui Luo}
\email{yluo11@binghamton.edu}

\affiliation{%
  \institution{Binghamton University}
  \country{}
}

\author{XiaoFeng Wang}

\affiliation{%
  \institution{Nanyang Technological University}
  \country{}
}
\email{xiaofeng.wang@ntu.edu.sg}

\author{Zhongshu Gu}

\affiliation{%
  \institution{IBM Research}
  \country{}
}
\email{zgu@us.ibm.com}







\renewcommand{\shortauthors}{}

\begin{abstract}
Semantic search is widely deployed in modern AI systems, but protecting both data contents and access patterns remains challenging. The current state-of-the-art system, \textsc{Compass}, achieves oblivious semantic search by building an optimized ORAM over HNSW graphs. However, even with aggressive optimizations, it still incurs large overheads. Closing this performance gap is fundamentally difficult: \textsc{Compass} has already removed most cryptographic overheads, leaving ORAM accesses as the dominant cost, which are constrained by well-known $\Omega(\log N)$ bandwidth lower bounds.

Our key insight is that traditional ORAM overhead stems from the assumption of limited private memory, whereas modern GPU TEEs provide large private memory (\pmm) that blinds internal access patterns~({\em Hunt et al., NSDI’23}). This shift opens a new design space. We therefore propose \sys, a first-of-its-kind oblivious semantic search system with a heterogeneous CPU–GPU TEE architecture that supports fast, scalable search with low cost of ownership. In \sys, the GPU TEE's large \pmm hosts the hot-path HNSW traversal, while the lower layers of the graph, if exceeds GPU capacity, are offloaded to CPU TEEs. The system invokes oblivious primitives only when accessing these lower layers. The availability of large \pmm also enables new optimization opportunities. For example, \sys features a host-access ORAM mechanism that goes beyond traditional performance constraints, and incorporates several data-dependent optimizations that are not possible in prior designs. We implement a prototype of \sys and benchmark it against \textsc{Compass}. Our result shows that \sys achieves \re{up to $99\times$ speedup} while maintaining high recall, with larger gains at scale.
\end{abstract}

\begin{CCSXML} 
<ccs2012>
 <concept>
  <concept_id>10002978.10003006.10003007</concept_id>
  <concept_desc>Security and privacy~Operating systems security</concept_desc>
  <concept_significance>500</concept_significance>
 </concept>
 <concept>
  <concept_id>10002978.10003006.10003013</concept_id>
  <concept_desc>Security and privacy~Distributed systems security</concept_desc>
  <concept_significance>500</concept_significance>
 </concept>
 <concept>
  <concept_id>10002978.10003006.10003011</concept_id>
  <concept_desc>Security and privacy~Hardware security implementation</concept_desc>
  <concept_significance>300</concept_significance>
 </concept>
 <concept>
  <concept_id>10010520.10010521.10010537</concept_id>
  <concept_desc>Computer systems organization~Distributed architectures</concept_desc>
  <concept_significance>300</concept_significance>
 </concept>
</ccs2012>
\end{CCSXML}

\ccsdesc[500]{Security and privacy~Operating systems security}
\ccsdesc[500]{Security and privacy~Distributed systems security}
\ccsdesc[300]{Security and privacy~Hardware security implementation}
\ccsdesc[300]{Computer systems organization~Distributed architectures}

\keywords{Oblivious semantic search, confidential computing, GPU TEE, ORAM}


\maketitle

\section{Introduction}
\label{sec:intro}

Semantic search~\cite{douze2024faiss, reimers2019sentence, malkov2018efficient, xiong2020approximate, chen2021spann, pan2023survey, bast2016semantic, mangold2007survey} maps queries and data into a shared embedding space and retrieves results by similarity. It has become a core primitive in modern AI systems, supporting recommendation engines~\cite{bobadilla2013recommender}, retrieval-augmented generation (RAG)~\cite{gao2023retrieval}, and personalized search~\cite{liu2020personalization}. It is also critical in pharmacology, chemistry, and biomedicine, where it enables similarity-based discovery over chemical compounds, molecular structures, genetic sequences, and patient records. However, existing cloud deployments are unsuitable for these high-stakes domains because of privacy, trust, and proprietary constraints. Most systems operate on plaintext, exposing both queries and embeddings to service providers. Although encrypted search techniques~\cite{wang2016secure, curtmola2006searchable, demertzis2020dynamic, demertzis2018searchable} support queries over encrypted data, they do not eliminate leakage in semantic retrieval: because search is data-dependent, observable memory-access patterns can still reveal structural information about query semantics and user intent~\cite{ding2025moecho, zhu2025compass, kellaris2016generic, blackstone2019revisiting}.

Recent work addresses this leakage through data-oblivious retrieval~\cite{zhu2025compass, engelsma2022hers, chen2020sanns, zhou2024pacmann, li2025panther, januszewicz2026ragtime}. However, most approaches either incur high overhead from heavy cryptographic primitives such as fully homomorphic encryption or garbled circuits~\cite{li2025panther, chen2020sanns, engelsma2022hers, januszewicz2026ragtime}, or remain largely theoretical~\cite{zhou2024pacmann, feng2025enabling}. \textsc{Compass}~\cite{zhu2025compass}, the state of the art, provides end-to-end oblivious semantic search with reasonable accuracy and scalability by building an optimized ORAM for HNSW indexes~\cite{malkov2018efficient}. Yet even with extensive optimizations, \textsc{Compass} still incurs seconds-level latency on million-scale datasets at roughly $0.9$ recall, whereas plaintext systems can achieve $>0.95$ recall with microsecond-level latency~\cite{qdrant_benchmarks}. Closing this gap is fundamentally difficult: \base already removes most heavy cryptographic costs, leaving ORAM accesses as the dominant bottleneck, as also observed in other oblivious data-processing systems~\cite{chamani2023graphos, zheng2024h, oblix, tinoco2023enigmap}. Given the well-known lower bounds of ORAMs~\cite{goldreich1987towards}, the remaining room for optimization is limited.

\vspace{4pt}\noindent{\bf A new opportunity.} Recent GPU-based Trusted Execution Environments (TEEs)~\cite{costan2016intel} open up a new opportunity. TEEs provide hardware-isolated execution that sandboxes sensitive code and data from the rest of the software stack. They have been widely used to build secure retrieval systems~\cite{oblix, chamani2023graphos, eskandarian19obliDB, zheng2017opaque, tinoco2023enigmap, Ahmed2025oasisdb}. Traditional CPU TEEs still rely on ORAM to achieve strong obliviousness. This is because memory access patterns can leak through shared microarchitectural states to co-located tenants~\cite{gras2018tlbleed, Kocher2018spectre}, or be inferred by adversaries observing memory traces~\cite{chuang2026tee}. However, GPU TEEs have changed this landscape. Unlike CPUs, GPUs execute kernels in a more dedicated manner. GPU TEEs default to a {\em full-device pass-through} mode~\cite{nvidia_confidential_computing_h100}, dedicating the entire GPU as an enclave to a single tenant. This minimizes cross-tenant interference. Moreover, modern GPUs integrate on-package 3D-stacked memory (e.g., 80GB of HBM~\cite{nvidia2025secureai}) tightly coupled with compute cores. There are no exposed external memory channels. As a result, even physical snooping of memory traffic becomes extremely hard~\cite{nvidia_confidential_computing_h100, volos2018graviton, gu2026blueprint}.


In classical ORAM studies, dedicated (non-shared), on-package memory is assumed to serve as private memory (\pmm), whose access patterns are not visible to the adversary~\cite{goldreich1996software}. In CPU settings, however, only registers satisfy this requirement, so that prior ORAM studies to assume merely $O(1)$-sized \pmm~\cite{goldreich1996software, asharov2020optorama}. GPU TEE’s isolated HBM meets the same requirement—it is on-package and dedicated to a single trusted workload—making it reasonable to treat as a much larger \pmm. Both industry~\cite{nvidia2025secureai} and academia~\cite{volos2018graviton} have recognized this capability. Recently, it has led to a new class of oblivious systems~\cite{hunt2020telekine,guo2025bolt} and opens a new design point that reduces the traditional ``ORAM tax.''


\vspace{4pt}\noindent{\bf The problem of this work.} Our key insight is that large \pmm in GPU TEEs reshapes the design space of oblivious semantic search. We therefore explore how to leverage this capability to rethink existing designs and ask the following central question:
    {\em Can \pmm in GPU TEEs enable new system designs that match the security of existing oblivious semantic search (e.g., \base), but with significantly lower overhead?}
More concretely, we follow a setting as \base, focusing on HNSW-based semantic search, a top-performing and widely adopted method with strong accuracy in practice\footnote{That said, our approach is not limited to HNSW and can be extended to other hierarchical graph index methods (see \S~\ref{sec:discussion}).}~\cite{zhu2025compass, malkov2018efficient}. HNSW organizes embeddings into a multi-layer graph, where nodes at each layer is an embedding instance and form a subset of the layer below, with the bottom layer containing all embeddings. Search follows a greedy, layer-by-layer descent, using upper layers for coarse routing and lower layers for fine-grained nearest neighbor refinement. Building on this basis, our goal is to leverage \pmm in modern GPU TEEs to build a secure outsourced retrieval system that, upon receiving a query from the data owner, efficiently traverses the HNSW graph to locate the top similar embeddings. The system must protect both the contents of the data and the memory access patterns during HNSW graph traversal. We also seek the system to maintain high recall rate (accuracy goal, {\bf G-1}), low search latency (performance goal, {\bf G-2}), and scalable to real-world datasets with Gigabytes or even Terabytes embeddings (scalability goal, {\bf G-3}). In addition, we aim to keep the total cost of ownership (TCO) low when leveraging GPUs (cost goal, {\bf G-4}).

\vspace{4pt}\noindent{\bf Unique challenges.} Despite the availability of large \pmm, achieving the above goals together require non-trivial designs. For example, meeting G-3 with a single GPU TEE is challenging, as real-world semantic datasets can easily exceed the HBM capacity of one GPU (e.g., 80 GB on an H100). A straightforward approach is to scale out to multi-GPU TEE deployments~\cite{nvidia2025secureai}. However, data retrieval workloads, including semantic search, are inherently memory-bound rather than compute-bound~\cite{li2025scaling}, so scaling out often leads to underutilized compute resources. Moreover, GPU TEE compute resources cannot be flexibly shared across workloads, further inflating TCO and making it difficult to satisfy G-4. 

\subsection{Our Contributions}
To address these challenges and meet our design goals, we propose \sys, the first {\bf H}eterogeneous-TEE-based system for {\bf O}blivious {\bf S}emantic {\bf S}earch. Figure~\ref{fig:overview} provides an overview of the system. 
\begin{figure}[tb]
    \centering
    \includegraphics[width=0.8\linewidth]{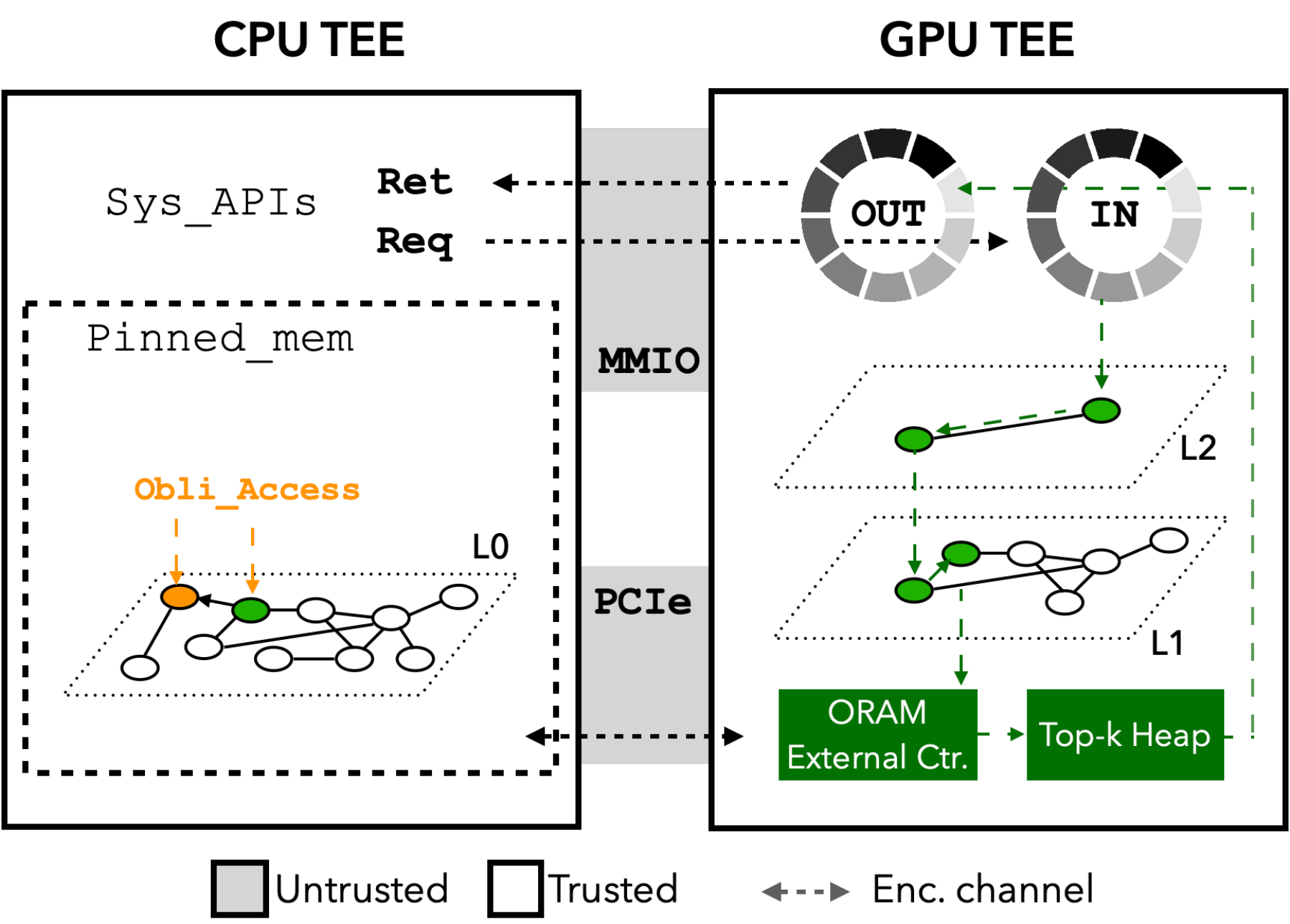}
    \caption{\sys overview. }
    \label{fig:overview}
\end{figure}
At its core, \sys adopts a heterogeneous design and avoids relying on multi-GPU TEEs. It offloads the HNSW graph, especially the lower layers, to the host CPU TEE, which offers a much larger secure memory space (e.g., up to terabytes). We partition the graph by layers, as the hierarchical structure of HNSW provides clean boundaries for both data and search. This layer-wise offloading simplifies both the system and the search algorithm. The upper-layer search runs entirely within the GPU TEE's \pmm, which allows fast data-dependent operations on the hot path. When the search descends to the lower layers, the system accesses host memory through a special ORAM controller inside the GPU TEE. With this design, all outsourced data remains protected within TEEs: data-dependent accesses are confined to the GPU TEE’s \pmm, while accesses to host memory are made oblivious. As such, both data contents and access patterns are protected, and meets our security goals. The heterogeneous design also lets the system scale to terabyte-scale datasets (G-3) with only a single GPU (G-4). In addition, the large \pmm provides us opportunities to introduce new ORAM designs that avoid traditional bottlenecks. This plays a key role in achieving efficient query capability (G-2) without lowering recall (G-1). We summarize our technical contributions below:
\begin{itemize}[leftmargin=*, itemsep=2pt]
    \item \textbf{\em Architecture.} We propose the first heterogeneous TEE-based oblivious semantic search system that delivers high efficiency, accuracy, and scalability while maintaining low TCO.
\item \textbf{\em Storage layout.} We propose a sorted, linear table representation for HNSW graphs that largely reduces heavy pointer-based adjacency structures, and enables efficient GPU execution and ORAM mapping. We map this logical design to concrete storage layouts across CPU and GPU TEE environments.
    
  \item \textbf{\em Execution model.} We propose a self-hosted execution model that repurposes CPU bypassing for security. Instead of using the CPU to orchestrate execution, the GPU maintains all query state and control flow, which reduces host interference.

    \item \textbf{\em New ORAM.} The large \pmm in GPU TEEs (e.g., $O(\alpha \cdot N)$, where $\alpha>0$ and $N$ denotes the total data size) opens new opportunities for ORAM design. We introduce an ORAM mechanism that supports access to host memory with only \smash{$O\left(\frac{\log N}{\log\log N}\right)$} bandwidth overhead. In contrast, under the traditional ORAM model with $O(1)$ \pmm, the lower bound is \smash{$\Omega\left(\log N\right)$}. We further optimize the design with a secure coalescing mechanism and provide theoretical analysis to validate its security and bandwidth gains.
  
    \item \textbf{\em Prototype and evaluation.} We implement \sys in CUDA C++ with 8.1K lines of code and conduct a comprehensive evaluation against the SOTA systems. Our results show that \sys achieves \re{up to $99\times$ speedup} and delivers larger gains at scale, with improvements increasing for larger datasets.
\end{itemize}


\section{Background}
\label{sec:background}

\subsection{Similarity Search}\label{sec:cmp-ss}



Similarity search is a prominent method for retrieving data from large databases. It is extensively used in search~\cite{halavais2017search} and recommendation systems~\cite{ko2022survey}. Recently, it has become increasingly prevalent in critical fields such as biomedical research, pharmaceutical discovery~\cite{lo2018machine, cheng2011identifying}, and chemical informatics~\cite{lopez2024molecular,willett1998chemical}. In similarity search, entities such as images, documents, or molecules are mapped into high-dimensional vectors - embeddings - and retrieval is performed by finding nearest neighbors under a distance metric. Semantic search is a prominent instance of similarity search that captures semantic meaning learned from data.

Finding exact nearest neighbors becomes computationally intractable at large scale and high dimensionality, making exhaustive search impractical in real-world deployments. This is addressed by adopting approximate nearest neighbor (ANN) search techniques, which sacrifice a small amount of accuracy for substantially greater efficiency. ANN methods use specialized indexing structures to accelerate queries without scanning all data points. Indexing strategies developed over the years include tree-based methods~\cite{bentley1975multidimensional,beygelzimer2006cover}, locality-sensitive hashing~\cite{sharma2018improving}, vector quantization~\cite{wu2024structured}, and graph-based approaches~\cite{malkov2020efficient}. Among these, graph-based methods have received substantial attention due to their strong accuracy, high performance, and applicability across diverse workloads.
\vspace{5pt}\noindent{\bf HNSW~\cite{malkov2018efficient}} 
is one of the most popular and effective graph-based ANN indexing 
methods~\cite{zhu2025compass, aumueller2017ann}. It organizes data points in a multi-level graph, where higher layers 
give a coarse overview and lower layers capture finer neighborhoods. Search starts at an entry node on the 
top layer and moves down level by level. At upper levels, a greedy walk follows a single best candidate, 
steering the query toward relevant regions. At the bottom layer, the search expands into a dynamic candidate 
list of size \emph{ef}, exploring multiple paths instead of just one. \emph{ef} is a key parameter - it 
controls how broadly the algorithm explores at the base layer, directly trading off accuracy against latency. 
A bigger \emph{ef} means more candidates are checked, giving higher recall but costing more compute. 

\subsection{TEEs and ORAMs} 
\noindent{\bf TEEs and access pattern leakages.} TEEs are hardware-protected regions of a processor 
that ensure code and data loaded inside them remain isolated from the rest of the 
system, including privileged software such as the operating system or hypervisor. 
Examples include Intel TDX and AMD SEV. While TEEs guarantee confidentiality and 
integrity of enclave memory, a notorious security pitfall is the \emph{access pattern leakage}:  the sequence of memory operations (e.g., \texttt{Load} and \texttt{Store}) performed by enclave programs can still be inferred by a malicious observer. Access pattern leakage arises in two primary ways. First, adversaries may exploit 
\emph{shared microarchitectural resources} such as caches, branch predictors, or TLBs 
to infer sensitive program behavior through timing differences~\cite{Kocher2018spectre, branch_predictor, yarom2014flush+, gras2018tlbleed}. 
Second, even if on-chip state is protected~\cite{xu2019hermetic}, \emph{external memory channels} remain 
visible: DRAM buses, PCIe links, DMA buffers, and even board-level copper traces 
can all be snooped or tampered with 
\cite{pessl2016drama, hu2020deepsniffer-2, gross2019breaking, zuo2020sealing}. 

ANN searches exhibit highly data-dependent memory traces, which makes them especially susceptible to access-pattern leakages~\cite{jia2025found,zhu2025compass}. Hence, relying solely on CPU TEEs to secure these algorithms is insufficient.

\vspace{5pt}\noindent{\bf ORAMs.} 
ORAM~\cite{goldreich1987towards, path-oram,ren2015constants, resizable-tree-based-oram, bindschaedler2015practicing, asharov2023futorama, asharov2020optorama, dittmer2020oblivious, zheng2024h} is a cryptographic primitive designed to hide memory access patterns in the RAM model. In its simplest form, memory is modeled 
as a sequence of address-value pairs $(\mathsf{addr}_i, v_i)$ with consecutive integer addresses. An ORAM ensures that every read or write operation results in a pseudorandom sequence 
of physical accesses, preventing an adversary from learning which exact memory line was targeted. Oblivious Maps (OMAPs)~\cite{zheng2024h, oblix, chamani2023graphos, guo2025bolt} generalize this notion to key-value stores with arbitrary keys 
(e.g., strings, sparse indexes) but comes with additional costs, typically in I/O bandwidths.

Traditional ORAMs follow a client-server model \cite{path-oram}, where a trusted 
client is responsible for memory obfuscation tasks such as shuffling and remapping memory blocks, while the server merely stores encrypted in-memory data. This design, however, imposes significant client-side overhead and high 
communication costs. To our knowledge, the only oblivious ANN system, Compass~\cite{zhu2025compass} is built upon this costly setup. Modern ORAMs reduce this burden by using CPU TEEs as the 
trusted controller and manages its private memory in ORAM subroutines. Unfortunately, even the 
obfuscation logic itself can 
leak critical access patterns. To address this, Mishra et al.\ introduced the notion of 
\emph{doubly-obliviousness (DO)} \cite{oblix}, which requires 
concealing the access patterns of both the data and the ORAM control program. 
In this work, we adopt the DO outsourcing paradigm as our default model and 
refer to it simply as \emph{obliviousness}.



\vspace{5pt}\noindent{\bf GPUs and GPU TEEs.} Modern GPUs expose massive parallelism through thousands of hardware threads organized into streaming multiprocessors (SMs). Threads execute in lockstep groups called warps (32 lanes on NVIDIA GPUs), where all lanes follow the same instruction stream~\cite{kayvon2004understanding}. Warps are further grouped into cooperative thread arrays (CTAs) that share fast on-chip scratchpad memory, and persistent kernels can keep CTAs resident to sustain high throughput. This execution model strongly favors structured, SIMD-style computation~\cite{kayvon2004understanding}. When threads within a warp diverge due to branches, execution becomes serialized, reducing efficiency~\cite{cuda_programming_guide, volkov2010better}. Similarly, irregular memory accesses degrade bandwidth utilization. As a result, GPUs perform best with uniform control flow and coalesced memory accesses, and tend to perform poorly with branch-heavy or irregular workloads~\cite{cuda_programming_guide}.

Recent work has extended TEEs beyond CPUs to accelerators, most notably GPUs~\cite{volos2018graviton, nvidia_confidential_computing_h100, hunt2020telekine, mai2023honeycomb}. While the original motivation was largely performance-driven, researchers have observed that GPU TEEs also provide strong resistance to access-pattern leakage due to their architectural design~\cite{volos2018graviton, hunt2020telekine, guo2025bolt}. First, GPUs are typically dedicated to a single tenant during secure execution~\cite{nvidia_confidential_computing_h100}, avoiding the microarchitectural resource sharing that enables many side channels on CPUs. Second, modern GPU TEEs incorporate large on-package HBM. Unlike CPU TEEs, where traffic to external DRAM remains observable, GPU HBM is integrated directly on the package and cannot be externally snooped. As a result, memory accesses within GPU HBM are effectively hidden even from a physical adversary~\cite{volos2018graviton, hunt2020telekine, guo2025bolt, gu2026blueprint}. Guo et al.~\cite{guo2025bolt} recently introduced a new ORAM design that exploits secure HBM, breaking through the long-standing bandwidth lower bound of traditional ORAMs.

\eat{
\subsection{Performance Optimization Techniques}

Boolean MPC is highly sensitive to round-trip latency and communication volume. \sys employs several optimization strategies to achieve practical performance:

\paragraph{Parallelism and Lock-Free Execution.}
Operators execute concurrently using lock-free queues and thread pools. Each message is uniquely tagged to allow operators to run in parallel without interference, maximizing hardware utilization even under high-latency network settings.

\paragraph{Vectorization and Batching.}
To accelerate primitive computations, \sys packs Boolean shares into SIMD registers (e.g., AVX2, AVX-512), enabling word-level parallelism during circuit evaluation. In addition to vectorization, operators are batch-executed to amortize communication costs across multiple data elements. Batching reduces per-invocation overhead and ensures high throughput, especially in wide-area deployments where round-trip latency dominates.

\paragraph{Dynamic Operator Tuning.}
\sys supports dynamic adjustment of bit-widths, batch sizes, and serialization granularity. These parameters are tuned at runtime to match the characteristics of the underlying hardware and workload distribution, helping reduce both communication overhead and execution bottlenecks.

\subsection{Deployment and Usage Model}

A typical \sys deployment involves two cloud providers serving as compute parties. Data owners secret-share their data locally and distribute shares to these providers. Analysts issue SQL queries through a trusted frontend, which compiles queries into oblivious execution plans. Results returned as shares are securely reconstructed by the analyst, ensuring data privacy throughout the entire pipeline.

\medskip
\noindent
In summary, \sys synthesizes advances in cryptographic theory, Boolean 2PC, oblivious algorithms, and system-level optimization to enable secure, efficient, and practical SQL analytics without trusted third-party infrastructure or pre-processing overheads.
}
\section{Threat Models \& Security Goals}
\label{sec:overview}

We formulate our system in the standard secure outsourced computing model. At a 
high level, the framework involves three logical entities: (i) a \emph{data owner}, 
who possesses a semantic dataset $D$ (in HNSW graph style) and wishes to outsource its storage and search 
to an untrusted cloud; (ii) a \emph{cloud server}, which manages computing 
resources (including TEEs) to deliver confidential semantic search services; and 
(iii)a \emph{vetted analyst}, authenticated to access $D$, who issues a semantic query $\texttt{Search}(q, D, k)$ to retrieve the top-$k$ embeddings in $D$ that are most similar to the input query $q$. \re{This work focuses on secure search over a pre-built HNSW index (e.g., by the data owner), as efficient oblivious HNSW search is already a non-trivial challenge. Other index-management primitives, such as insertion and deletion, naturally build upon the same search primitive and are discussed in \S~\ref{sec:discussion}.}



\subsection{Threat Model}

We adopt the same threat model as prior accelerator TEE work~\cite{hunt2020telekine,guo2025bolt,volos2018graviton} and industry specifications~\cite{nvidia_confidential_computing_h100}. We trust the cloud provider’s organizational integrity but treat its software stack, administrators, and co-located tenants as untrusted. The adversary is powerful: it may compromise any software layer and obtain physical access to hardware, enabling passive observation of off-chip channels. In particular, we assume all exposed interconnects, including DMA buffers~\cite{gross2019breaking}, host DRAM~\cite{pessl2016drama,chuang2026tee}, and PCIe links~\cite{hu2020deepsniffer-2}, can be snooped. We exclude chip-level attacks such as depackaging or probing silicon interposers~\cite{nvidia_confidential_computing_h100}, as well as active physical side channels (e.g., power analysis~\cite{xiang2020open,li2022power} and electromagnetic emanations~\cite{gongye2024side}) that require invasive access, such as remove heat sinks and unseal chip packages, and are impractical in data center settings. We also assume the GPU TEE operates in full passthrough mode and is dedicated to a single program (e.g., \sys), as is typical in current deployments~\cite{nvidia_confidential_computing_h100}; therefore, co-location attacks that rely on sharing a GPU with a victim~\cite{nayak2021mis,zhang2023t,nazaraliyev2025not,zhang2024invalidate,almusaddar2025shadowscope} are out of scope. We do not consider availability attacks~\cite{luo2020stealthy}, covert channels~\cite{miao2024veiled}, as they're not targeting the confidential computing guarantees. We also do not consider attacks via malicious user inputs~\cite{guo2024gpu}, as these fall outside the threat model of secure outsourced computing, where users are assumed to be data owners themselves or their authorized users. Lastly, 
we assume the attacker cannot break symmetric or public-key cryptosystems.

\subsection{Security Goals}
\label{sec:security-definition}
We follow the standard obliviousness definitions used in many ORAM papers~\cite{path-oram, ren2015constants, asharov2020optorama}. We slightly adapt it in the context of semantic searches. Intuitively, given an HNSW graph dataset $D$ and a query $q$, we define $\textsf{path}(q, D) \gets \bigcup_{i=0}^{k} \mathbf{r}_i$ as the logical traversal path for answering $q$, where $\mathbf{r}_i$ denotes the sequence of logical node accesses at layer $i$ of the HNSW graph. Our security notion requires that for any two queries $q$ and $q'$ with the same traversal lengths, i.e., $\forall i \in \{0, \dots, k\},\ |\mathbf{r}_i| = |\mathbf{r}'_i|$, no probabilistic polynomial-time (p.p.t.) adversary in our threat model can distinguish between their executions by observing the leakage transcripts. This is analogous to classical ORAM definitions, where security is defined over logical access sequences of same length~\cite{path-oram}. Specifically,

\begin{definition}[Obliviousness in HNSW search]\label{def:security}
For any p.p.t. adversary $\mathcal{A}$, and for any two queries $q$ and $q'$ over the same HNSW data $D$, such that $\forall i \in \{0, \dots, k\},\ |\mathbf{r}_i| = |\mathbf{r}'_i|$, we have
\[
\left| 
\Pr[\mathcal{A}(\textsf{View}(q, D)) = 1] - 
\Pr[\mathcal{A}(\textsf{View}(q', D)) = 1]
\right| \le \textsf{negl}(\lambda),
\]
where $\textsf{View}(\cdot)$ is the leakage transcript, and $\textsf{negl}(\lambda)$ denotes a negligible probability under the security parameter $\lambda$.  
\end{definition}




\subsection{Infrastructure Assumptions}
We make several infrastructure-level assumptions. First, GPU TEE memory is limited: the footprint of ANN indexes--including both structural metadata and embeddings--can exceed the capacity of a single GPU's HBM (e.g., 80 GB on an H100 TEE)~\cite{widmoser2025shine, munyampirwa2024down}. We do not assume the ability to stitch multiple GPUs into a larger trusted pool, as current vendors provide no such support and it would require new hardware features~\cite{nvidia_confidential_computing_h100}. While multi-GPU TEEs are an interesting direction, they are orthogonal to our focus. We assume that host-side private memory (e.g., CPU TEEs) is comparatively larger and can hold the remainder of the index. Under this model we study a purely in-memory setting, though our design naturally extends to persistent storage at even larger scales. Concretely, we parameterize capacity as $\texttt{size\_of}(M_{\mathsf{hbm}}) = \alpha \cdot \texttt{size\_of}(M_{\mathsf{host}})$, where $M_{\mathsf{hbm}}$ and $M_{\mathsf{host}}$ denote GPU and CPU TEE memory, respectively, and $\alpha \in (0,1)$. Finally, we assume standard TEE protections, including memory encryption, integrity verification, replay defense, and encrypted inter-TEE communication~\cite{volos2018graviton,nvidia_confidential_computing_h100}. We omit these baseline mechanisms and focus on the core contributions of \sys.

\section{\sys System Design}
\label{sec:indexes}
In this section, we present the technical design of \sys. We first describe the storage layout for HNSW graphs in \sys (\S~\ref{sec:storage}). We then take a top-down view, starting from the top-layer searches in GPU TEEs (\S~\ref{sec:top}) and proceeding to the new ORAM design that support fast bottom-layer accesses on CPU TEE (\S~\ref{subsec:oram}).


\subsection{Storage Layout}\label{sec:storage}
Perhaps the most fundamental question we must address is \emph{how to store the HNSW graph}. This is non-trivial. Common graph representations, such as adjacency matrices and adjacency lists, do not fit our setting well. Adjacency matrices scatter data across memory, leading to fragmented layouts and poor utilization of GPU HBM. Adjacency lists rely on pointer chasing and irregular access patterns, which are inefficient on GPUs. Moreover, neither representation maps well to ORAM semantics, which assume a compact address space and support only fixed-size address–value accesses (i.e., accessing data by logical address over contiguous blocks). A direct mapping would create sparse address spaces and require expensive OMAPs~\cite{guo2025bolt, zheng2024h, tinoco2023enigmap} instead of simpler ORAM schemes. These then motivate us to design a new representation that better matches both GPU execution and ORAM access models.

\eat{
\begin{itemize}[leftmargin=*, itemsep=2pt]
    \item {\em Minimize I/O overhead.}  
    Communication between CPU and GPU TEEs is costly: each transfer requires re-encryption, decryption, and integrity checks. Current GPU TEEs cannot directly DMA host private memory, forcing all data to pass through a bounce buffer~\cite{nvidia_confidential_computing_h100}, which adds an extra memory copy. Our first goal is to design a data partitioning plan that keeps I/O small so that search can run efficiently.  
    \item {\em Maximize HBM efficiency.}  
    GPU HBM is scarce but extremely valuable for data-dependent computation. Standard graph layouts (e.g., adjacency lists) are sparse and often create fragmentation, wasting HBM capacity. Our second goal, thus, is to design data layouts that reduce fragmentation and fully exploit HBM while aligning with the GPU’s architecture.  
    \item {\em ORAM-friendly representation.}  
    The graph representation must also be compatible with ORAM’s address–value access model. A na\"ive key–value encoding could work but still problematic: intermediate-layer nodes span discontinuous regions of the graph, creating sparse address spaces (Figure~\ref{fig:flow}.1) that require expensive OMAPs rather than simpler ORAM schemes. Therefore, our third goal is to develop a graph representation that is compact and ORAM-compatible.  
\end{itemize}
}
\begin{figure}[tb]
    \centering
    \includegraphics[width=0.8\linewidth]{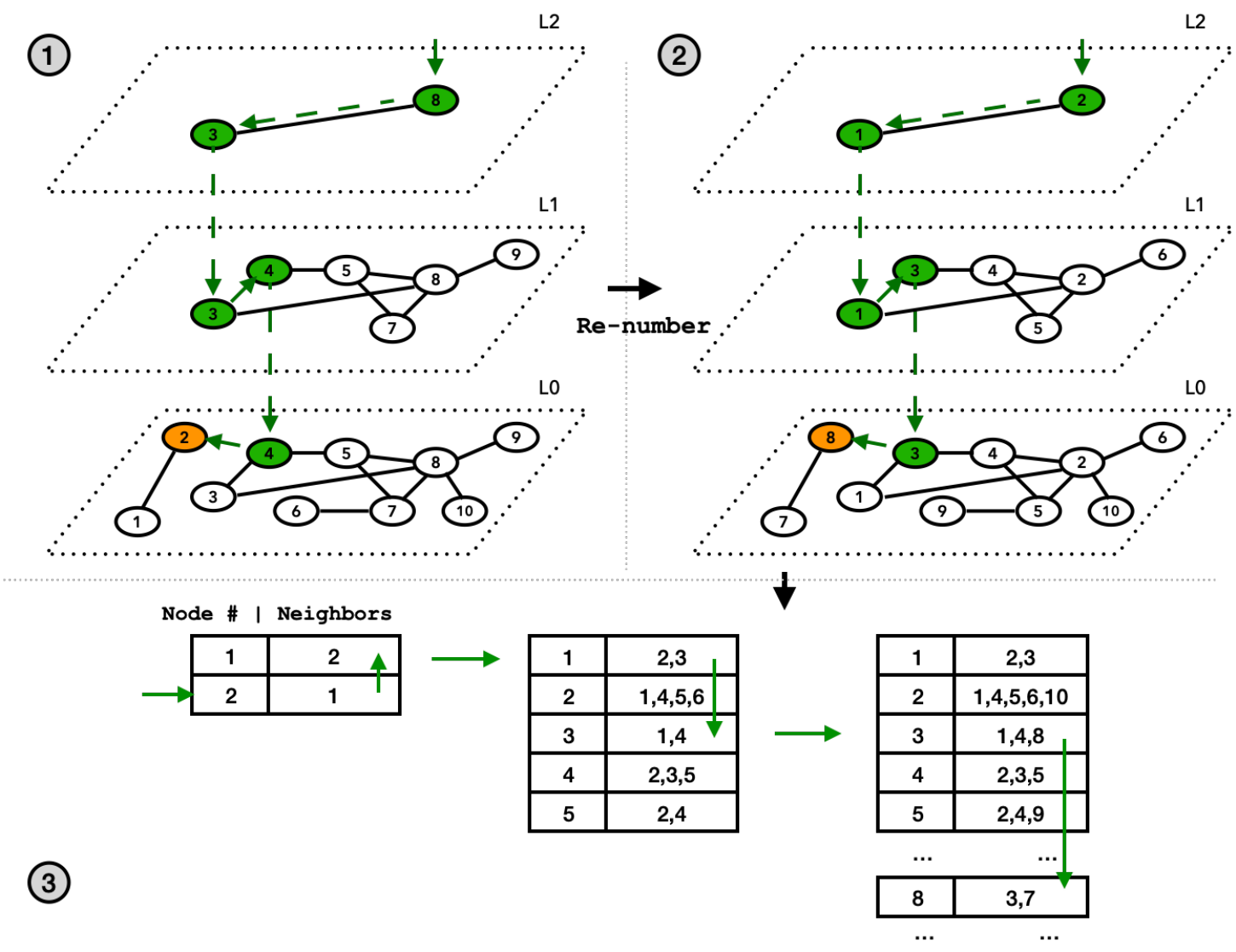}
    \caption{Sorted, decomposed linear table representation}
    \label{fig:flow}
\end{figure}

\vspace{5pt}\noindent{\bf Sorted linear tables for HNSW.} To be efficient for both GPU execution and ORAM semantics, we introduce a novel yet surprisingly simple graph representation. The key idea is to flatten each layer of the hierarchy into a compact linear table, sorted by node ID. Figure~\ref{fig:flow} shows the layout.

Every node is assigned a temporary ID that stays the same across all layers the node appears in. These IDs are allocated top-down: nodes first appearing at the highest layer get the smallest IDs, then the next layer down, and so on until layer 0. Since each layer is a superset of the layer above it, a node that appears at layer $\ell$ also appears at layers $\ell-1, ..., 1, 0$, always under the same ID. Moreover, in HNSW graphs, each node has a bounded number of neighbors bounded by parameter $M$. This means that we can allocate a fixed-size slot for each node to store its neighbor list. Together, these features lead to a simple and uniform addressing scheme. Given a node ID, we can locate its entry in any layer by direct indexing, i.e., by computing $\texttt{base\_addr} + \texttt{ID} \times \texttt{size\_of(node)}$. As a result, both within-layer access and cross-layer access take $O(1)$ time. This addressing model matches ORAM semantics well and also removes pointer chasing, which is especially important for efficient GPU execution.

This representation is simple, and so too is the process of generating it, as illustrated in Figure~\ref{fig:flow}. We start with a hierarchical graph whose nodes are identified by arbitrary IDs (e.g., document IDs) (Figure~\ref{fig:flow}.1). The key step is to assign each node a new temporary ID that is consistent across all layers where the node appears. These temporary IDs are allocated in ascending order, proceeding layer by layer from the topmost to the bottom (Figure~\ref{fig:flow}.2). Once assigned, constructing the linear tables becomes straightforward: each layer is treated as an independent graph, and we record its ID-to-neighbor mappings, sorted by the temporary ID.

\vspace{5pt}\noindent{\bf Physical storage.} With the sorted linear-table abstraction in place, we now describe how data is physically laid out across CPU memory and GPU TEE memory. 

As noted earlier, we partition storage along layer boundaries. For upper layers that reside in GPU HBM, we use a decomposed layout: the graph structure is stored in one contiguous array, while embeddings are stored in a separate contiguous embedding slab. This avoids duplicating large embedding vectors across layers and minimizes fragmentation. Moreover, as long as embeddings are laid out in node-ID order, we can locate an embedding by direct indexing, using the node ID. In this way, we do not need to store pointers from graph entries to embeddings. As a result, the graph structure only stores compact neighbor IDs, rather than full pointers or embedding references. This reduces metadata overhead, makes the GPU layout more compact, and frees more HBM capacity that potentially to store more upper layers.
 
For the host-resident layers, however, we use a different physical layout. There, each entry co-locates the neighbor list and the embedding in one self-contained record. The reason is simple: on the host side, the bottleneck is not storage capacity but I/O. If structure and embedding were stored separately, each node expansion would require multiple oblivious accesses. By packing them together, one ORAM access retrieves everything needed to process a node. This may duplicate some embedding data across layers, but it significantly reduces ORAM traffic, which is the more important optimization in this setting.

For the remaining host-resident layers, table entries embed both structure and embeddings directly. While this may duplicate embeddings across multiple layers when $k > 1$, we say that the host bottleneck is not memory capacity but I/O overhead when transferring memory blocks to the GPU. Co-locating structure and embeddings eliminates the need for separate ORAM lookups, and  halves I/O invocations. Moreover, this layout preserves the logical shape of linear tables, as such we can design efficient ORAM primitives ({\bf G-3}, \S~\ref{subsec:oram}) for retrieval, rather than resorting to heavyweight OMAPs.



\subsection{Execution model}\label{sec:top}
Modern GPU programs rely on a sequence of short kernels launched by the CPU~\cite{cuda_programming_guide, volkov2010better}. This model is convenient but introduces privacy risks in our setting. Frequent host interactions expose fine-grained, data-dependent execution states to the CPU~\cite{Kocher2018spectre, lipp2018meltdown, canella2019systematic}, which makes them vulnerable to side-channel attacks. In addition, CPU–GPU traffic patterns, which are observable on interconnects (e.g., PCIe), can leak strong signals about data-dependent secrets~\cite{hunt2020telekine}.

Our response is new a form of \emph{CPU bypassing}. In HPC systems, CPU bypass is typically used to reduce software overhead and move the fast path closer to the device~\cite{akram2022sok, li2019evaluating}. Here, we repurpose this idea for security. Rather than treating the GPU as a stateless accelerator, we let it self-host the search. Conceptually, a long-lived kernel maintains the full query state on device, advances the traversal across the HNSW hierarchy, and initiates and manages ORAM calls entirely from the device side for host access. It materializes results to host memory only at the end. In other words, we remove the CPU from the step-by-step control path not to reduce overhead, but to limit host interference and potential leakage.

To realize this execution model, we design a persistent search kernel with staged execution (Figure~\ref{fig:design} shows an overview). All storage is provisioned before the kernel starts: On the host side, the final-layer(s) data is placed in pinned memory and organized using an ORAM-friendly layout (\S~\ref{subsec:oram}). On the GPU side, HBM is pre-allocated for the upper-layers index and runtime structures, including input and output buffers, auxiliary data structures (e.g., visited bitmaps during graph traverses, etc.), and scratch pad memory space. The host continuously posts queries to a device-visible input buffer, and a fixed set of resident CTAs occupies the GPU and repeatedly pulls queries. The kernel then proceeds in 4 stages. 
\begin{figure}[tb]
    \centering
    \includegraphics[width=0.9\linewidth]{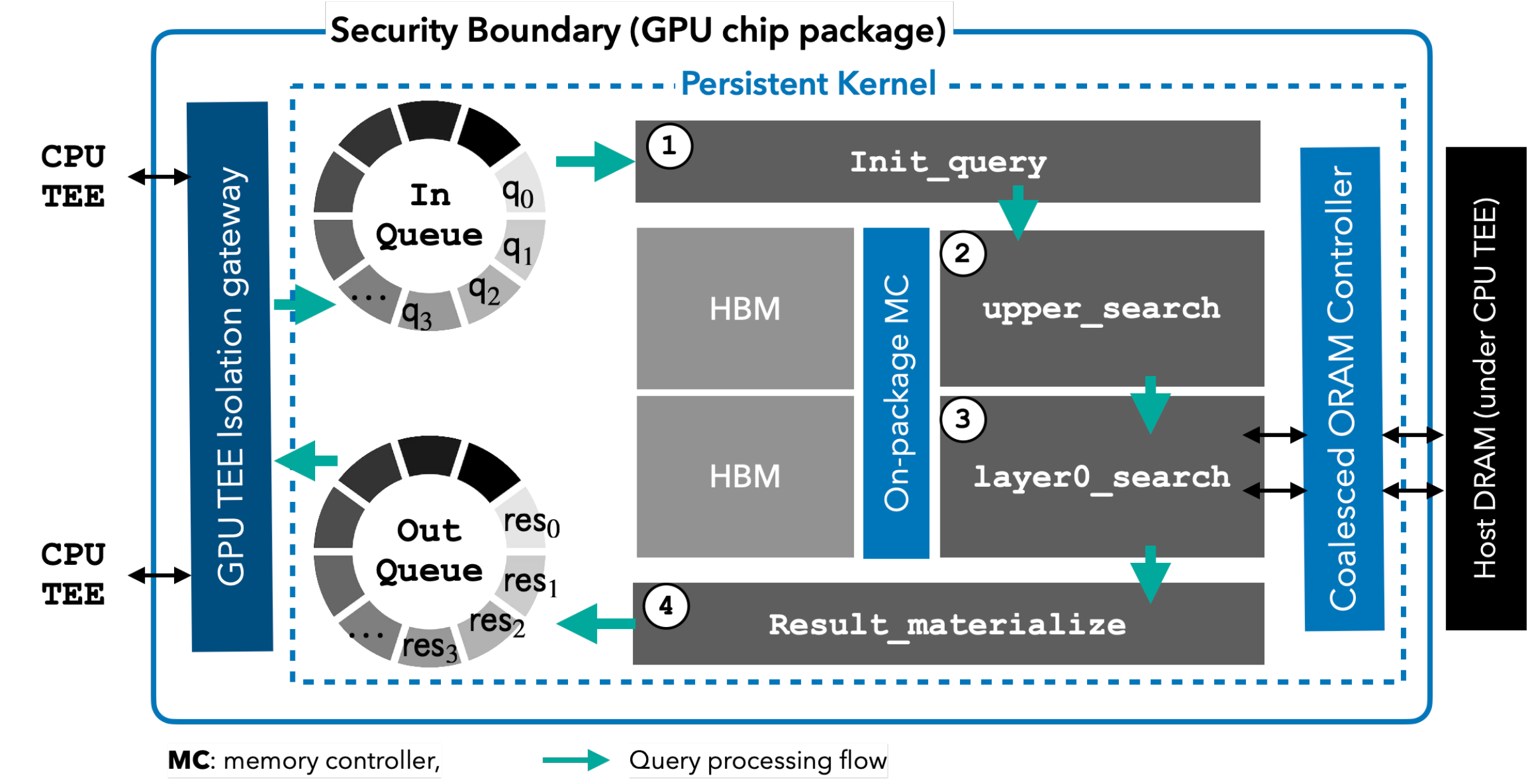}
    \caption{The \sys architecture and execution model}
    \label{fig:design}
\end{figure}

\vspace{4pt}\noindent{\bf \texttt{init\_query}.} A warp pulls a query from the input queue, decodes it, and initializes a compact per-query context on device and then execution directly transitions to the next stage.

\vspace{4pt}\noindent{\bf \texttt{upper\_search}.} The same warp first executes the top-layer HNSW search over the GPU-resident hierarchy. To accelerate the search, we also add intra-query parallelisms, where warp lanes cooperatively evaluate neighbors of the current node, process strided subsets of candidates, and compute distances in parallel.

\vspace{4pt}\noindent{\bf \texttt{layer0\_search}.} Once the traversal reaches the host-resident final layer(s), the algorithm switches from direct HBM access to device-side ORAM calls. The key point is that the kernel does not return control to the CPU. Instead, it issues ORAM-backed fetches through a GPU-resident controller (\S~\ref{subsec:oram}) that manages all ORAM logic and metadata on device. This stage preserves the same search semantics as standard HNSW search, but maps all node accesses to ORAM subroutines. One practical change is in how we realize the beam-search frontier on the GPU. The frontier is the set of active candidate nodes that the search maintains and expands at each step. In classical HNSW, it is implemented using dynamic priority queues that repeatedly extract the next candidate while updating the current best set~\cite{malkov2018efficient}. That organization works well on CPUs, but maps poorly to warp execution because it leads to irregular memory updates and branch-heavy control flow. We thus maintain the working set as bounded candidate and result arrays managed cooperatively by the warp. At each expansion, lanes scan strided subsets of the arrays, use warp-level reductions to identify the next candidate to expand, and then update the arrays in place. 


\vspace{4pt}\noindent{\bf \texttt{result\_materialize}.} After the final-layer search converges, the kernel performs final top-$k$ selection on device and writes back the final identifiers, embeddings and corresponding distances. 



\eat{
\subsection{Top Layer Search}\label{sec:top}
Existing high-performance HNSW libraries on CPUs routinely exploit inter-query parallelism~\cite{x}, allowing many queries to run in flight. In our setting, however, this is far harder to realize. The main obstacle is the obliviousness constraint at the host tail. In \sys, the bottom host layer(s) must be accessed through oblivious primitives, which ensure that even two identical queries produce randomized memory traces. While parallel ORAMs exist in theory, they are difficult to realize and even harder to align with GPU execution. To keep the security argument clean, we must therefore assume sequential oblivious accesses. Yet forcing everything to run sequentially would waste the very parallelism GPUs are built for. We therefore propose a two-phase traversal: a massively parallel GPU phase for top layers followed by a strictly serialized host phase.

Recall that all top HNSW layers are stored as linear tables in HBM. To make traversal GPU and parallel friendly, we further regularize both data and control flow. Each node now stores a fixed-size neighbor array of length $ef_{\mathsf{max}}$, the degree bound already assumed in HNSW. Shorter lists are padded with sentinels, so every neighbor expansion executes the same uniform loop of length $ef_{\mathsf{max}}$. A persistent kernel dispatches queries at the granularity of warps: each warp owns one query at a time; when dimensionality or beam width is high, a CTA (multiple warps) can cooperate on a single query. Within a warp (or CTA), SIMD is applied in two ways. First, lanes process strided subsets of neighbors in the fixed array, masking out sentinels but avoiding divergence. Second, distances are ranked using warp-shuffle intrinsics and branchless compare–swaps on two bounded arrays: the candidate ring (frontier) and the best-so-far list. Both are kept entirely in registers or shared memory. This replaces pointer-heavy heaps with compact, uniform updates, turning HNSW’s irregular execution into a predictable dataflow.

When a query reaches its descent point into the host layer, the GPU does not block. Instead, it parks the query by emitting a compact checkpoint (query handle, node ID, radius, and a small visited sketch) into a device queue, and the warp immediately resumes a fresh query from the pool. A separate serialized consumer kernel drains this queue, processing one query at a time and executing the oblivious host-access routines (\S~\ref{subsec:oram}) needed to expand neighbors in the final layers.
}

\subsection{GPU-Assisted Coalesced ORAM}
\label{subsec:oram}
We now introduce our novel ORAM design for efficiently accessing offloaded HNSW layer(s) in host (CPU TEE) memory.

\vspace{5pt}\noindent{\bf A first attempt--direct following of BOLT~\cite{guo2025bolt} design.}
The most straightforward approach is to adopt a classical CPU-based (doubly-oblivious) ORAM~\cite{zheng2024h,oblix,chamani2023graphos}. In this design, the GPU generates logical memory requests (e.g., which node to retrieve) and submits them to the CPU through an encrypted channel. The CPU runs a full ORAM runtime, executes the retrieval subroutine on host memory, and returns the requested record (plus dummies) to the GPU. While functionally correct, this approach suffers from the well-known $\Omega(\log N)$ I/O blowup of classical ORAM~\cite{goldreich1987towards}. In our setting, where cross-TEE I/O is already the bottleneck, such overhead is immediately prohibitive.

A natural alternative is to follow BOLT, which pushes most of the ORAM logic into the accelerator itself. In BOLT’s model, the accelerator computes logical addresses and issues them directly, leaving the host side to a lightweight role of translating addresses and performing raw memory accesses. This eliminates much of the CPU overhead. However, directly porting BOLT to our setting still performs poorly. BOLT is designed as a general OMAP rather than an ORAM, introducing machinery unnecessary for our goal. Moreover, its design relies on randomized power-of-two choices, which translate into condition-heavy execution that is inefficient on GPUs. Finally, treating HNSW simply as a generic retrieval workload fails to exploit properties unique to graph search.


\vspace{5pt}\noindent{\bf Lessons learned and our ideas.} These failed attempts suggest two important design principles. First, oblivious ANN search should exploit properties unique to HNSW rather than treating it as a generic retrieval workload. Second, oblivious algorithms should be designed around GPU execution instead of inheriting condition-heavy control logic from CPU- or hardware-oriented designs. Guided by these principles, we redesign oblivious HNSW from the ground up.

Our first insight is that HNSW enables a form of oblivious coalescing that is impossible in general retrieval workloads. \re{Na\"ively coalescing repeated accesses breaks obliviousness~\cite{path-oram}, since deterministically merging requests to the same memory location produces a distinguishable access transcript. The key observation is that the neighbors expanded from a node in HNSW are inherently distinct. Hence, when coalescing is restricted within a neighbor expansion, every access in the batch is already unique, so coalescing does not alter the access distribution (\S~\ref{sec:analysis}). Any reduction in memory accesses is indistinguishable from random collisions under independent sampling.} This HNSW-specific optimization naturally aligns with the GPU's SIMD execution model, substantially reducing cross-TEE I/O. Moreover, such coalescing fundamentally relies on graph-search semantics and therefore cannot be safely performed by the CPU, but can be exploited inside the GPU's \pmm.


Our second insight is that BOLT’s load balancing can be greatly simplified. Instead of using randomized power-of-two choices, we use uniform random remapping. This increases the asymptotic bandwidth cost from $O(\log\log N)$ to $O(\tfrac{\log N}{\log\log N})$ (see \S~\ref{sec:analysis}), where $N$ is the total data size. However, for practical data sizes (e.g., $N \leq 2^{32}$), this gap is small. In return, the hot path becomes branch-free and SIMD-friendly, which leads to larger performance gains in practice.

\vspace{5pt}\noindent{\bf Coalesced ORAM (CORAM).} We now present our CORAM design based on the above insights. For clarity, we focus exclusively on the data access path and describe CORAM without tying it to a specific HNSW layer. In general, the CORAM can be used to access any layers. To distinguish between TEE spaces, we use \textcolor{blue}{blue} text to denote structure stored in host memory (CPU TEE).

Initially, we partition host storage into $M$ fixed-sized blocks \textcolor{blue}{$V_{\mathsf{host}}[1..K]$}. Each block stores a set of records $(v, \textsf{nbrs}(v), \textsf{emb}(v))$, padded with dummy entries to maintain a uniform layout. The location of each record is tracked by a dense position map $\mathsf{MAP}[1..N]$, which resides entirely inside the GPU TEE. The GPU also maintains per-block stash queues $S[1..K]$ to temporarily hold records that have been remapped but not yet written back to host storage. We also maintain an entry cache $\mathsf{EC}$ on the GPU. For each node in the last GPU-resident layer (layer $k$), $\mathsf{EC}$ stores its neighbors in the next layer ($k{+}1$), which is offloaded to the host. This cache stores only neighbor IDs and allows the GPU to immediately start expansion once the search reaches the boundary layer.

At a high level, when expanding neighbors of a node, the GPU first looks up $\mathsf{MAP}$ to identify the blocks that contain the required records, and then reads these blocks from host memory in a coalesced batch. Once fetched, the kernel locates the target records on the GPU. Since some records may have been accessed earlier and remapped but not yet written back, they may not appear in the fetched blocks and must instead be retrieved from the stash. After processing, each accessed record is remapped to a new block. The system then writes back the blocks fetched in this round. At the same time, any records in the stash that are mapped to these blocks are merged and written back together. This workflow is summarized in Algorithm~\ref{alg:final-layer-stash-only}.



\begin{algorithm}[tb]
\caption{Coalesced ORAM for Neighbor Expansion}
\begin{algorithmic}[1]
\Statex {\bf GPU TEE:} position map $\mathsf{MAP}[1..N]$; per-block stash queues $S[1..M]$; entry$\!\to$final cache $\mathsf{EC}$; node to expand $u$.
\Statex {\bf CPU TEE:} \textcolor{blue}{$V_{\mathsf{host}}$}$[1..K]$ (page-aligned).
\Statex
\State $\mathcal{V} \gets x.\texttt{GetNeighbors}()$
\State $\mathcal{B} \gets \texttt{Distinct}\!\big(\{\mathsf{MAP}[v] \mid v\in\mathcal{V}\}\big)$
\State $\{\,\mathsf{page}_b\,\}_{b\in\mathcal{B}} \gets \texttt{READ}\!\big(\{\textcolor{blue}{V_{\mathsf{host}}}[b] \mid b\in\mathcal{B}\}\big)$

\For{$v \in \mathcal{V}$} \State $rec[v] \gets \texttt{ExtractRemove}\!\big(\mathsf{page}_{\,\mathsf{MAP}[v]}\cup S_{\,\mathsf{MAP}[v]},\, v\big)$ \EndFor
\State \texttt{ComputeOnGPU}\!$\big(\{\,rec[v]\,\}_{v\in\mathcal{V}}\big)$

\For{$v \in \mathcal{V}$} \Comment{\emph{stash-only remap}}
  \State $b' \gets \texttt{UniformRandom}(1..M)$;\quad $\mathsf{MAP}[v]\gets b'$
  \State $S[b'] \gets S[b'] \cup \{rec[v]\}$
\EndFor

\For{$b\in\mathcal{B}$} \Comment{\emph{drain only for blocks read this round}}
  \State $\textsf{pg}'_b \gets \texttt{Assemble}\!\big(\mathsf{page}_b \cup S[b]\big)$;\quad $S[b]\gets \emptyset$
\EndFor
\State \texttt{WRITE}\!$\big(\{\,(\textcolor{blue}{V_{\mathsf{host}}}[b],\ \textsf{pg}'_b) \mid b\in\mathcal{B}\,\}\big)$
\end{algorithmic}
\label{alg:final-layer-stash-only}
\end{algorithm}

\vspace{5pt}\noindent{\it Stage.1 Coalesce read (line 1:3).} To expand the neighbors of a node $x$, the GPU first obtains its neighbor set $\mathcal{V}$. This step is straightforward. If $x$ is an entry node, its neighbors are directly available in $\mathsf{EC}$. Otherwise, $x$ must have been fetched earlier, and its full neighbor list is already read into GPU. Each $v \in \mathcal{V}$ is then mapped through $\mathsf{MAP}[v]$ to its current host block, and a deduplication step produces the distinct block set $\mathcal{B}$. We store $\mathsf{MAP}$ as a dense array indexed by node IDs, which allows the GPU to efficiently process many lookups in parallel. The deduplication is also implemented using parallel primitives, which fits well with GPU execution. The result is a compact set of block IDs, which is submitted as a batch of read requests to the CPU TEE. Each request corresponds to one page-aligned \texttt{READ}, and the CPU returns all requested pages over the bounce buffer~\cite{nvidia_confidential_computing_h100}.

\vspace{5pt}\noindent{\it Stage.2 GPU local computation (line 4:7).} Once the pages are resident, the GPU extracts the requested records by invoking \texttt{ExtractRemove} on their source blocks. The search also checks the stash $S$, since some nodes may have been accessed earlier but not yet written back to host storage. Importantly, even if a node is served from the stash, the corresponding page request must still be issued in Stage 1; otherwise, an attacker could infer whether a page has been evicted. The \texttt{ExtractRemove} operation returns the record and clears its slot, so the block reflects a consistent logical state. The extracted records are then processed by the local expansion routine (\texttt{ComputeOnGPU}), which computes distances, filters candidates, and selects the next frontier entirely within GPU HBM.

\vspace{5pt}\noindent{\it Stage.3 Batch random remap (line 8:11).} After the local computation, each record is remapped to a new block ID chosen uniformly at random from $[1..M]$. As discussed earlier, this simplification removes branch-heavy load-balancing logic and keeps the hot path SIMD-friendly. The position map $\mathsf{MAP}$ is updated accordingly, and the record is appended to the stash queue $S[b']$ for its destination block $b'$, where it waits to be written back. 

\vspace{5pt}\noindent{\it Stage.4 Eviction (line 12:15).} The final phase drains the stash queues for all blocks in $\mathcal{B}$. For each block $b$ that was read in this round, we assemble the updated page image by merging the surviving contents of $\mathsf{page}_b$ with all records currently in $S[b]$. The stash for $b$ is then cleared, and the assembled page is written back in a single \texttt{WRITE} to \textcolor{blue}{$V_{\mathsf{host}}$}$[b]$. As a result, each block in $\mathcal{B}$ is touched exactly once per round, one read and one write, independent how many remapped records it absorbed from the stash. 

\vspace{5pt}\noindent{\bf Stash optimizations.} The stash is the temporary holding area for tuples that have been removed from a bucket but cannot yet be written back to their new locations. Because every ORAM access may touch the stash, it sits directly on the critical path. A naive GPU design would treat it as one dynamic container and linearly scan or physically reshuffle entries whenever a tuple is remapped or evicted. That is manageable on a CPU, but on a GPU it becomes expensive: remapping would move full tuple contents, eviction would touch unrelated entries, and updates would create exactly the kind of irregular memory traffic and synchronization pressure that GPUs handle poorly.

To address this, we use a decomposed storage that separate tuple payloads from stash metadata. The payload is the full node tuple, including node id, neighbor list, and embedding values, and it is stored in a contiguous array of fixed-size stash slots. The metadata only track where each tuple is stored and which bucket it is currently associated with. We maintain this metadata using two lightweight structures: a ring-buffer free pool and a reverse index. The free pool recycles slot identifiers, so insertion pops a free slot and eviction pushes it back. The reverse index is organized as per-bucket rows of \texttt{(node\_id, slot\_idx)} pairs. This design makes remapping cheap: a tuple usually stays in place, and only its reverse-index association moves from one bucket row to another. Eviction is also optimized. Instead of scanning the whole stash, we only scan the row for the current bucket, copy out the referenced slots, and return those slot ids to the free pool. In short, we move metadata eagerly and payloads only when they are actually consumed.

We also make stash access SIMD-friendly. A fully scalar stash path would be easy to implement, but it would serialize the most bandwidth-heavy part of ORAM access. Our design instead uses a split rule: one designated thread in the warp, lane~0, handles the small control-critical updates, such as popping or pushing slot ids in the free pool, clearing an old reverse-index entry, or writing a new \texttt{(node\_id, slot\_idx)} association after remap. The rest of the warp cooperatively performs row lookup, slot copy, dummy padding, and output materialization. To coordinate these steps, we use standard warp primitives such as \texttt{\_\_shfl\_sync}, which broadcasts a value from one lane to the others, and \texttt{\_\_ballot\_sync}, which collects per-lane predicates into a bit mask so the warp can quickly determine which entries matched or which slots are valid. This design keeps the logic correct without giving up throughput.

\section{\sys Analysis}
\label{sec:analysis}

We now present the formal analysis of \sys.

\vspace{4pt}\noindent{\bf Security analysis.} We first analyze the security guarantees of \sys. Recall that observable memory traces arise only from the host-resident final layer(s). Accordingly, the analysis reduces to the obliviousness of the CORAM host-access primitive.

\begin{theorem}\label{tm:security}
Let $K$ be the number of host blocks and let $M$ denote the maximum neighbor list size in HNSW. For any final-layer expansion over a neighbor list $\mathcal{V}$ with $|\mathcal{V}| \le M$, the CORAM host access in \sys satisfies the obliviousness definition in Definition~\ref{def:security}.
\end{theorem}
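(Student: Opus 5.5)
We prove the theorem by a hybrid/simulation argument. We construct a simulator $\mathsf{Sim}$ that is given only the leakage-admissible quantities, namely the per-layer traversal lengths $|\mathbf{r}_i|$. We then show that, for every query $q$, the real transcript $\textsf{View}(q,D)$ is computationally indistinguishable from $\mathsf{Sim}$'s output. Two queries $q,q'$ with equal traversal lengths then yield views that are each close to the same simulated distribution, which gives Definition~\ref{def:security} by the triangle inequality.

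The first step is to pin down the adversary's view. All upper-layer work, the position map $\mathsf{MAP}$, the stashes $S$, the entry cache $\mathsf{EC}$ and the deduplication run inside GPU \pmm and are invisible. The persistent kernel removes CPU orchestration, so no control-flow signal crosses PCIe. What remains visible per expansion is: the set $\mathcal{B}$ of host block indices read (line 3), the same set written back (line 15), and the contents of the pages, which are freshly encrypted under the assumed TEE encryption and integrity protections. By IND-CPA security, we can replace all page contents and bounce-buffer payloads with encryptions of zeros via a standard hybrid. The view then reduces to the sequence of multisets of block indices, one set $\mathcal{B}_t$ per expansion $t$. We must also ensure that $|\mathcal{V}|$ is not leaked. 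The paper asks us to fix each neighbor list to the bound $M$ with sentinel/dummy entries, and every request is issued even when the record is served from the stash. Then $\mathcal{B}_t$ is the deduplicated image of exactly $M$ positions under $\mathsf{MAP}$.

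The core step is showing that the index sequence $(\mathcal{B}_t)_t$ is distributed as if every $\mathsf{MAP}[v]$ value it queries were an independent uniform sample from $[1..K]$ (equivalently $[1..M]$ in Algorithm~\ref{alg:final-layer-stash-only}'s indexing). The argument runs by induction over expansions. At expansion $t$, the neighbors $v\in\mathcal{V}_t$ are pairwise distinct, which is the HNSW property noted in \S~\ref{subsec:oram}. For each $v$, the value $\mathsf{MAP}[v]$ was either set uniformly at random the last time $v$ was accessed (line 9), or set by the initial uniform placement. In either case, the value has not influenced any transcript since it was sampled, because the only transcript-visible use of $\mathsf{MAP}[v]$ is at an access to $v$ itself, and that access immediately resamples it. Distinctness within $\mathcal{V}_t$ guarantees that no two lookups in the same round reuse one fresh sample. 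Hence, conditioned on the entire prior transcript, the $M$ values are i.i.d.\ uniform. The visible $\mathcal{B}_t=\texttt{Distinct}(\cdot)$ is a deterministic function of these samples, so any coalescing appears exactly as random collisions of $M$ uniform draws, independent of which logical nodes were requested. $\mathsf{Sim}$ therefore just draws $M$ uniform indices per expansion, deduplicates them, and emits read/write of that set, repeating this for the number of expansions determined by the $|\mathbf{r}_i|$.

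I expect the main obstacle to be making the independence claim rigorous. One difficulty is the initial placement: records must start at uniformly random blocks, not at their sorted-layout positions, or the first access to each node leaks. The other is the write-back of line 15. Eviction drains only blocks in $\mathcal{B}_t$, and stash contents depend on logical history, so we must check that stash occupancy never affects the visible trace. Writes touch exactly $\mathcal{B}_t$, and page size is fixed with dummy padding. Stash overflow must also be handled: we would either assume a stash bound that fails with negligible probability, as in the bandwidth analysis, or treat overflow as an abort event with $\textsf{negl}(\lambda)$ probability and absorb it into the distinguishing advantage.
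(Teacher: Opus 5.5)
Your proposal is correct and rests on the same argument as the paper's proof. Fresh uniform remapping makes each revealed block index an independent uniform draw over $[K]$ (a stash hit still issues the read). Distinctness of neighbors within one expansion makes the coalesced set $\mathcal{B}_t$ a deterministic function of those i.i.d.\ draws, and TEE encryption hides page contents. Your simulator and triangle-inequality packaging, together with the conditions you make explicit, only make rigorous what the paper's direct comparison of $q$ and $q'$ leaves implicit: padding $\mathcal{V}$ to $M$ so per-expansion batch sizes do not leak, uniformly random initial placement, write-backs touching exactly $\mathcal{B}_t$ with fixed-size pages, and overflow treated as a negligible abort.
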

\begin{proof}[Proof.]
We prove obliviousness by showing that, for any two queries $q$ and $q'$ satisfying $\forall i \in \{0, \dots, k\},\ |\mathbf{r}_i| = |\mathbf{r}'_i|$, the distributions of their observable traces are computationally indistinguishable. We first consider the non-coalesced variant. Each logical node access is mapped to a host block chosen uniformly at random from $[K]$, and after every access the node is remapped independently to a fresh random block. Therefore, the sequence of block identifiers revealed during execution forms a sequence of independent uniform samples over $[K]$. Since $q$ and $q'$ induce traversal paths of the same length, their observable traces consist of the same number of such samples. In addition, if a requested node is served from the stash, the system issues a dummy block read, so the observable trace preserves both length and structure. As a result, the distributions of block access sequences under $q$ and $q'$ are identical.

\re{We now consider the coalesced variant. During the expansion of a node in the final HNSW layer, each neighbor appears at most once in the neighbor list. Therefore, each logical node is accessed at most once, and ORAM remapping introduces no dependencies among accesses within the same expansion.  Consequently, all accesses can be issued as a single batch without changing the logical execution. Since both $q$ and $q'$ produce sequences of equal length with identical distributions, the distribution over distinct block identifiers (or coalesced transcript) remains identical for both executions.}



Since all blocks are encrypted by TEE's memory encryption mechanism, and thus we say that any p.p.t. adversary cannot distinguish executions of $q$ and $q'$ unless the encryption itself fails.
\end{proof}

\vspace{4pt}\noindent{\bf Bandwidth complexity.} We next analyze host-I/O cost. We say that it is enough to study the bandwidth of one neighbor expansion invocation as the total bandwidth of a full query is linear in the number of such expansions.

\begin{theorem}\label{tm:band}
Let $N$ be the number of data points and let $K = \Theta(N)$ be the number of host pages. In the non-coalesced variant, the bandwidth of one neighbor expansion is at most
\[
O\!\left(M \cdot \frac{\log N}{\log \log N}\right)
\]
except with probability at most $1/N$.
\end{theorem}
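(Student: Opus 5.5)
The bound reduces to a balls-into-bins load estimate. In the non-coalesced variant, one neighbor expansion makes at most $M$ block reads and at most $M$ block writes. The cost of each transfer is the size of the host block involved. So the task is to bound the size any block must have to hold its records, plus whatever the stash contributes. I would show that the maximum number of records mapped to any single host block is $O(\log N/\log\log N)$ with probability at least $1-1/N$. Each page transfer then costs $O(\log N/\log\log N)$ records, and summing over at most $2M$ transfers gives $O\!\left(M\cdot \frac{\log N}{\log\log N}\right)$. Since $|\mathcal{V}|\le M$ and all nodes in $\mathcal{V}$ are distinct, the count is at most $M$ reads and $M$ writes, not more.

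\textbf{Step 1: the map is uniform at each time step.} The invariant I would prove is that, at any point, $\mathsf{MAP}$ is a uniformly random function from the $N$ nodes to the $K$ blocks. Initial placement is assumed uniform. Every access, as in Algorithm~\ref{alg:final-layer-stash-only}, resamples $\mathsf{MAP}[v]$ uniformly and independently of everything the adversary or the query could condition on. This is the same independence used in the proof of Theorem~\ref{tm:security}.

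There is one subtlety. A record remapped to $b'$ sits in the stash $S[b']$ until $b'$ is next read. Its logical home is still $b'$, so the number of records logically associated with $b$ (in the page or in $S[b]$) is exactly $|\{v:\mathsf{MAP}[v]=b\}|$. When $b$ is drained, all of these are written into block $b$. The physical page is therefore never asked to hold more than this count.

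\textbf{Step 2: max-load bound.} With $N$ balls thrown independently into $K=\Theta(N)$ bins, the load $L_b$ of a fixed bin satisfies
\[
\Pr[L_b\ge t]\le \binom{N}{t}K^{-t}\le \Big(\frac{eN}{tK}\Big)^t=(c/t)^t
\]
for a constant $c$. Choosing $t=C\log N/\log\log N$ with $C$ large enough makes $(c/t)^t\le N^{-3}$. A union bound over the $K=O(N)$ bins then gives $\Pr[\max_b L_b\ge t]\le 1/N$ at a fixed time step. Since pages are fixed-size and padded to capacity $t$, each \texttt{READ}/\texttt{WRITE} moves $O(t)$ records.

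\textbf{Main obstacle: reusing the bound across time.} The difficulty is that the block capacity must hold at every time, while the statement concerns one expansion. I would handle this by noting that the theorem is stated per invocation, so the fixed-time bound from Step 2 suffices as stated. If a bound over all $\mathrm{poly}(N)$ operations were needed, one would raise $C$ so the per-step failure probability is $N^{-c'}$ and union-bound over the steps.

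I also need to check that the GPU-side stash never forces extra host traffic. Stash contents are held in HBM, which is $\mathsf{Pmem}$, so the stash is free in the host-bandwidth accounting. The one per-expansion cost outside the block transfers is the dummy reads, and these are already counted among the $M$ reads.
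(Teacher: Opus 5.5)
Your proposal is correct and follows the same route as the paper: model the random remapping as $N$ balls thrown into $K=\Theta(N)$ bins, bound the maximum page load by $O(\log N/\log\log N)$ except with probability at most $1/N$, and multiply by the at most $M$ page transfers of one expansion ($2M$ in your count, reads plus writes, which gives the same bound). Your uniform-map invariant, the stash-to-page accounting, and the explicit $\binom{N}{t}K^{-t}$ tail bound with a union bound only make rigorous steps that the paper cites or leaves implicit.
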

\begin{proof}[Proof.]
As in prior ORAM analyses~\cite{guo2025bolt,path-oram,zheng2024h,ren2015constants}, the random-remapping process can be modeled as a balls-and-bins experiment: the $N$ nodes are balls, the $K$ host pages are bins, and each node is assigned independently and uniformly to one bin. Let $\ell_{\max}$ be the maximum page load. Standard balls-and-bins bounds~\cite{raab1998balls,mitzenmacher2017probability} show that when $K = \Theta(N)$, $\ell_{\max} = O\!\left(\frac{\log N}{\log \log N}\right)$
except with probability at most $1/N$. In one non-coalesced final-layer expansion, at most $M$ neighbor accesses are issued. Since the amount of data transferred by one logical access is bounded by the maximum block load, the total bandwidth is at most $M \cdot \ell_{\max}$ blocks.
\end{proof}
Note that the per-block bandwidth is upper bounded by $O\left(\frac{\log N}{\log \log N}\right)$, which provides a guideline for setting block sizes in practice. With this choice, the probability of overflow, i.e., that more data is randomly remapped to a block than it can accommodate, is at most $1/N$. For large $N$, this probability is negligible.

\begin{theorem}\label{tm:coalesced}
Let $K = \Theta(N)$ and assume $M = o(K)$. In one coalesced neighbor expansion, the expected number of host page reads is $M - \Theta(M^2/K)$. In particular, coalescing saves $\Theta(M^2/K)$ page reads in expectation compared to the $M$ reads in the non-coalesced variant.
\end{theorem}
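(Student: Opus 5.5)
We model one coalesced expansion as a balls-and-bins experiment, in the same way as the proof of Theorem~\ref{tm:band}. Write $m=|\mathcal{V}|\le M$ for the number of neighbors expanded; the argument is stated for the worst case $m=M$. As in the proof of Theorem~\ref{tm:security}, the neighbors in $\mathcal{V}$ are distinct nodes. By the random remapping in Stage~3 of Algorithm~\ref{alg:final-layer-stash-only}, each node's current block $\mathsf{MAP}[v]$ is uniform over $[K]$. Two facts make the positions independent across $v\in\mathcal{V}$: each record was last remapped by a fresh, independent draw, and a node untouched since initialization has a random initial placement. Under this model, the number of page reads equals $|\mathcal{B}|$, the number of distinct bins hit when $M$ balls are thrown independently and uniformly into $K$ bins.

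We compute $\mathbb{E}|\mathcal{B}|$ by linearity of expectation over bins. A fixed bin is hit with probability $1-(1-1/K)^M$, so
\[
\mathbb{E}|\mathcal{B}| = K\left(1-\left(1-\tfrac{1}{K}\right)^{M}\right).
\]
We then expand $(1-1/K)^M$ by the binomial theorem, or equivalently bound it using Bonferroni inequalities:
\[
1-\tfrac{M}{K}+\binom{M}{2}\tfrac{1}{K^2}-\binom{M}{3}\tfrac{1}{K^3}\le\left(1-\tfrac1K\right)^M\le 1-\tfrac{M}{K}+\binom{M}{2}\tfrac{1}{K^2}.
\]
Multiplying by $K$ gives
\[
M-\binom{M}{2}\tfrac1K\le \mathbb{E}|\mathcal{B}|\le M-\binom{M}{2}\tfrac1K+\binom{M}{3}\tfrac1{K^2}.
\]
The assumption $M=o(K)$ implies $\binom{M}{3}/K^2 = O(M^3/K^2)=o(M^2/K)$, so the savings $M-\mathbb{E}|\mathcal{B}|$ equal $\binom{M}{2}/K\,(1-o(1))=\Theta(M^2/K)$. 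An alternative route gives the same conclusion: count colliding pairs, whose expectation is $\binom{M}{2}/K$, and note that $M-|\mathcal{B}|$ is sandwiched between the number of colliding pairs and that number minus the number of colliding triples.

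The main obstacle is justifying the modeling step: showing that the $M$ block positions are truly independent and uniform at the moment of the expansion. The position map changes over time, and a neighbor may currently sit in the stash rather than in host memory. The approach is to argue, as in Theorem~\ref{tm:security}, that stash residency does not affect the issued request. A stashed node still triggers a read of $\mathsf{MAP}[v]$, which is its fresh uniform destination block. Moreover, each node's last remap is an independent draw, and no later draw depends on it. For nodes touched only at initialization, we assume the initial placement was also uniformly random, which matches the balls-and-bins model already used for Theorem~\ref{tm:band}. The hypothesis $K=\Theta(N)$ is used only to interpret $\Theta(M^2/K)$ as $\Theta(M^2/N)$; it plays no role in the calculation itself.
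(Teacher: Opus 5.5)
Your proof is correct and takes essentially the same route as the paper: the same balls-and-bins model, linearity of expectation over bins giving $\mathbb{E}|\mathcal{B}| = K\bigl(1-(1-1/K)^M\bigr)$, and a second-order expansion under $M=o(K)$. Your Bonferroni sandwich makes rigorous the paper's unjustified step $(1-1/K)^M = 1 - M/K + \Theta(M^2/K^2)$ and identifies the saving as $\binom{M}{2}/K\,(1-o(1))$; your argument that the block positions are independent and uniform fills in a modeling assumption the paper simply asserts.
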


\begin{proof}
We model the $M$ logical accesses as placing $M$ balls independently and uniformly into $K$ pages. Let $X$ be the number of distinct pages touched. A given page is accessed unless all $M$ accesses avoid it, which occurs with probability $(1 - 1/K)^M$. Thus, each page is touched with probability $1 - (1 - 1/K)^M$, and summing over all $K$ pages gives $\mathbb{E}[X] = K(1 - (1 - 1/K)^M)$. When $M = o(K)$, we use the expansion $(1 - 1/K)^M = 1 - M/K + \Theta(M^2/K^2)$, which yields $\mathbb{E}[X] = K(M/K - \Theta(M^2/K^2)) = M - \Theta(M^2/K)$. 
\end{proof}

\vspace{4pt}\noindent{\bf Stash analysis.} We now study the size of the stash.

\begin{theorem}[stash size]\label{thm:stash-sz}
Let $N$ be the number of tuples and $K$ the number of host blocks. Assume each logical access is drawn uniformly from the $N$ tuples, and each remapped tuple is reassigned independently and uniformly to one of the $K$ blocks. Let $\ell_{\max}$ be the max page size, and define $x^*:=\frac{NK}{N+K}$. Then, with probability at least $1-1/N$ the stash size is bounded by
\[
x^* + O\!\left(\ell_{\max}\sqrt{x^*\ln N}\right)
\]
\end{theorem}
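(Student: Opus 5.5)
We model the stash as a Markov chain and study its stationary distribution. At each step a tuple is removed from a uniformly chosen block. That tuple is placed in the stash at a uniformly random destination block. Then the stash contents assigned to the block that was read are drained.

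Let $X_t$ be the stash size after $t$ accesses. We track its drift in the per-tuple form. A tuple in the stash sits in the queue of a uniformly random block. Each round, that queue is drained with probability roughly $1/K$, since a uniformly chosen tuple lives in a given block with probability about $1/K$ by symmetry. Conversely, a tuple outside the stash is picked and moved into the stash with probability $1/N$ per round. So each tuple independently (to first order) alternates between ``stash'' and ``host'' with leave rates $1/K$ and $1/N$. Its stationary probability of being in the stash is therefore
\[
\frac{1/N}{1/N+1/K}=\frac{K}{N+K}.
\]
Summing over the $N$ tuples gives $\mathbb{E}[X]=NK/(N+K)=x^*$, which is exactly the centering term in Theorem~\ref{thm:stash-sz}. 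Equivalently, $\mathbb{E}[X_{t+1}-X_t\mid X_t=x] = 1 - x(1/K+1/N)$ up to lower-order terms, so $x^*$ is the fixed point of the drift.

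For concentration, we write $X_t-x^*$ as a contracting process plus martingale increments. Each step changes $X_t$ by at most one insertion and at most the number of stashed tuples drained from the read block. Under the paper's block-capacity assumption from Theorem~\ref{tm:band}, that drained quantity is bounded by $\ell_{\max}$. The martingale differences are thus bounded by $O(\ell_{\max})$.

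The conditional variance per step is $O(\ell_{\max}\cdot x/K)$, because a drain occurs with mass about $x/K$. The drift contracts deviations at rate about $1/K+1/N$. So in stationarity the variance is
\[
O\!\left(\frac{\ell_{\max}\,x^*/K}{1/K+1/N}\right)=O(\ell_{\max}x^*),
\]
which we bound crudely by $\ell_{\max}^2 x^*$. We then apply Freedman's inequality, or an Azuma-style bound for supermartingales with contraction, to the exponentially weighted sum of increments. This gives a deviation of $O\big(\ell_{\max}\sqrt{x^*\ln N}\big)$ with failure probability $1/N$. A union bound over the polynomially many time steps of interest costs only a constant factor inside the logarithm.

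The main obstacle is the dependence between tuples and between drains: the per-tuple independence heuristic is not exact. A cleaner path is to avoid it entirely and prove concentration of $X_t$ directly via the drift-plus-bounded-increment argument. That argument needs only the one-step conditional drift and the $\ell_{\max}$ bound on the increment size. We would also handle the $1/N$ event that some block load exceeds $\ell_{\max}$ by conditioning on the good event from Theorem~\ref{tm:band} and absorbing its probability into the final failure bound.
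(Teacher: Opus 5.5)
Your proposal follows essentially the same route as the paper: the one-step drift $1 - x/N - x/K$ with fixed point $x^*=NK/(N+K)$, increments bounded by $\ell_{\max}$, and a negative-drift tail bound $\exp(-\Omega(u^2/(x^*\ell_{\max}^2)))$, where your Freedman argument on the exponentially weighted sum (contraction rate $1/N+1/K=1/x^*$) spells out the BOLT technique the paper only cites, and your per-tuple two-state chain is a nice independent derivation of the centering term. One caveat applies equally to you and the paper: the exact drift given the full state is $1-\frac{1}{K}-\frac{1}{N}\sum_b n_b s_b$, where $n_b$ is the load of block $b$ (the read block is chosen with probability $n_b/N$) and $s_b$ is the number of stash entries destined for $b$; this is not a function of $X_t$ alone, so the step you say needs ``only the one-step conditional drift'' would, to be rigorous, also have to control how the stash is spread across blocks.
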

\begin{proof}
We view the stash as a queue whose entries are labeled by their destination page in $[K]$. Let $X_t$ be the stash size after the $t$-th access. In one step, the stash changes for two reasons: one accessed tuple may be newly inserted into the stash, and some existing stash entries may be written back when their destination page is serviced. Accordingly, we write $X_{t+1}=X_t+A_t-D_t$, where $A_t\in\{0,1\}$ is the insertion indicator and $D_t$ is the number of tuples written back. Conditioned on $X_t=x$, exactly $x$ of the $N$ tuples already reside in the stash. Under the uniform-access assumption, the next logical access hits one of these tuples with probability $x/N$, so a fresh insertion happens with probability $1-x/N$. Thus $\mathbb{E}[A_t\mid X_t=x]=1-x/N$. At the same time, each of the $x$ queued tuples matches the currently processed page with probability $1/K$, so $\mathbb{E}[D_t\mid X_t=x]=x/K$. Combining the two terms gives the one-step drift
\(\mathbb{E}[X_{t+1}-X_t\mid X_t=x]=1-\frac{x}{N}-\frac{x}{K}\).

This expression already shows where the stash stabilizes: the equilibrium is the point where the expected change becomes zero. Solving $1-\frac{x}{N}-\frac{x}{K}=0$ gives $x^*=\frac{NK}{N+K}$. Moreover, if the stash is above this level, say $x=x^*+r$ for some $r>0$, then the drift becomes $\mathbb{E}[X_{t+1}-X_t\mid X_t=x^*+r]=-r\left(\frac{1}{N}+\frac{1}{K}\right)<0$. In other words, once the stash grows above $x^*$, the process has a restoring tendency that pulls it back.

We now extend this expectation bound into a tail bound. Since in one step, at most one tuple is inserted and at most $\ell_{\max}$ tuples are written back, so the stash size changes by at most $\ell_{\max}$ in absolute value. We then apply the same negative-drift concentration technique used in~\cite{guo2025bolt}, which yields $\Pr[X_t\ge x^*+u] \le \exp\!\left(-\Omega\!\left(\frac{u^2}{x^*\ell_{\max}^2}\right)\right)$. Setting $u=c\,\ell_{\max}\sqrt{x^*\ln N}$ for some constant $c>1$ gives $\Pr[X_t\ge x^*+u]\le 1/N$.
\end{proof}

This stash analysis shows that the stash size is bounded and does not grow indefinitely. We further validate this empirically in \S~\ref{sec:main-exp}.

\eat{

\vspace{4pt}\noindent{\bf Security analysis.} We first analyze the security guarantees of \sys. Recall that observable memory traces arise only from the host layer(s). Thus, our analysis focuses on the obliviousness of the CORAM primitives.

\begin{theorem} The CORAM host access in \sys satisfies our security (obliviousness) definition in \S~\ref{sec:security-definition}.
\end{theorem}
\begin{proof} We first consider a non-coalesced variant of Algorithm~\ref{alg:final-layer-stash-only}, and assuming each data point is initially mapped uniformly at random to one of $M$ host pages. For any neighbor list $\mathcal{V}$, the sequential neighbor expansion trace is therefore a set of uniformly random pages, since every node is remapped to a random page at each access. Note that, even if a node is served from the stash, at the read stage (Alg~\ref{alg:final-layer-stash-only},line 3), we still issue a page read to avoid leakage. Hence, there exists a generator $\mathcal{G}$ that can simulate this trace, for example by sampling $ef_{\mathsf{max}}$ blocks uniformly with replacement from $M$ dummy memory blocks. Because memory blocks are encrypted under the standard TEE mechanism, the adversary cannot distinguish between the real trace and the simulated one. Now consider the coalescing design, where the only difference in the real trace is that duplicate blocks are merged into a single request. This trace is easy to simulate: for example, $\mathcal{G}$ can sample $ef_{\mathsf{max}}$ random draws from $M$ dummy blocks with replacement and then output only the distinct ones in a single batch. Thus, we say the CORAM design is oblivious, as the information carried by its memory trace is limited to $M_{\max}$ and the fixed page size.
\end{proof}

\vspace{4pt}\noindent{\bf Bandwidth complexity.} Next, we analyze bandwidth overhead. Since the system bottleneck lies in cross-TEE I/O, the analysis reduces to measuring the bandwidth complexity of CORAM when serving a single query.
\begin{theorem} Let $M=\Theta(N)$, and non-coalesced, the bandwidth of a single query serving is at most $O(\frac{ef_{\mathsf{max}}\cdot\log_{2}N}{\log_2\log_2N})$.
\end{theorem}
\begin{proof}  As shown in~\cite{guo2025bolt,path-oram, zheng2024h, ren2015constants}, the random-remapping procedure can be modeled as a classic balls-and-bins problem: $N$ nodes are treated as balls, $M$ host pages as bins, and each ball is placed uniformly at random into a bin. The single-access bandwidth cost of CORAM reduces to analyzing the maximum bin load in this balls-and-bins setting. By tight bounds~\cite{raab1998balls,mitzenmacher2017probability}, the probability that any bin exceeds $O(\tfrac{\log N}{\log \log N})$ balls is at most $\frac{1}{N}$. Hence, for non-coalesced access, the bandwidth per query is bounded by $O(ef_{\mathsf{max}} \cdot \tfrac{\log N}{\log \log N})$. For large $N$ (e.g., powers of two), the probability of exceeding this bound is exponentially small.
\end{proof}

\begin{theorem}\label{tm:coalsed}
Let $M=\Theta(N)$, $ef_{\mathsf{max}}=o(M)$ (not dense graphs), and coalesced. Then the bandwidth of a single query serving is at most 
$\!(ef_{\mathsf{max}} - \tfrac{ef_{\mathsf{max}}^{2}}{M})\cdot O(\frac{\log N}{\log \log N})$.
\end{theorem}
\begin{proof}
It suffices to bound the number of distinct pages touched under coalescing, since the per-page bandwidth cost remains unchanged. 
We model the $ef_{\mathsf{max}}$ neighbor requests as throwing $ef_{\mathsf{max}}$ balls into $M$ bins with replacement; 
the number of page reads is then the number of occupied bins $X$. For $ef_{\mathsf{max}} = o(M)$, standard occupancy analysis~\cite{raab1998balls} gives $\mathbb{E}[X] = ef_{\mathsf{max}}-\Theta\left(\tfrac{ef_{\mathsf{max}}^{2}}{M}\right)$. Although bin occupancies are not independent, they are negatively correlated, so Chernoff bounds still apply 
(e.g., by Poissonization~\cite{mitzenmacher2017probability}). 
Thus $X$ concentrates around its expectation: $X = (1 \pm o(1))\,\mathbb{E}[X]$ with high probability. Hence, the number of page reads satisfies
\[
X \;\le\; ef_{\mathsf{max}} - \Theta\!\left(\tfrac{ef_{\mathsf{max}}^{2}}{M}\right)
\quad\text{w.h.p.}
\]
\end{proof}
Theorem~\ref{tm:coalsed} shows that coalescing saves at least $\tfrac{ef_{\mathsf{max}}^{2}}{M}\cdot O(\tfrac{\log N}{\log \log N})$ in bandwidth overhead, with high probability.
}

\section{Evaluation}
\label{sec:evaluation}
We now discuss our evaluation of \sys to quantify its performance and scalability. We conduct end-to-end comparisons with prior work, analyze system bottlenecks, and examine cost breakdowns, memory usage, and overheads.

\subsection{Evaluation Setup}\label{sec:setup}
\vspace{5pt}\noindent{\bf \sys implementation.} We implement \sys in CUDA C++17, with about 8.1K lines of code (LoC), which keeps the trusted computing base (TCB) relatively small. We build the system using NVIDIA CUDA 12.9 (nvcc 12.9.86). By default, \sys offloads the final layer to the host, as this layer dominates the memory footprint. However, the design is not tied to this choice—since CORAM is generic, we can offload additional layers as well. We include separate experiments to evaluate different offloading configurations in \S~\ref{sec:off}. For host storage, we use a block size of 16 slots (i.e., $K = N/16$). Each block initially contains 8 real records and 8 dummy entries to maintain a uniform layout. We allocate the host storage region using \texttt{cudaMallocManaged} and configure it so that these pages are not cached in GPU HBM. This setup stresses the access path and lets us measure the worst-case performance when accessing layer-0 data. On the GPU, we use a \texttt{ChaCha20}-based random number generator for all randomness in CORAM.

\vspace{4pt}\noindent{\bf Baseline systems.}We use \textsc{Compass}~\cite{zhu2025compass} as our main baseline for end-to-end comparison. To our knowledge, it is the current state-of-the-art in oblivious semantic search with open-source artifacts. We also include \textsc{Bolt}~\cite{guo2025bolt} as a baseline ORAM design in our microbenchmarks, which focus specifically on ORAM performance. \textsc{Bolt} is, to our knowledge, the first design that leverages isolated HBM (\pmm) to accelerate ORAM operations. \re{We evaluate \textsc{Compass} and \sys on the same machine to enable a fair performance comparison. For \textsc{BOLT}, since we do not have access to the specialized hardware (e.g., the U55C FPGA) used in their evaluation, we report the best performance numbers from their original paper~\cite{guo2025bolt}. While obtained on different hardware, these results still provide a useful point of reference.}  

\vspace{4pt}\noindent{\bf Testbed.} All experiments run on a dual-socket server with two AMD EPYC 9124 CPUs (32 cores / 64 hardware threads in total) with SEV-SNP, 256GB of system memory, and an NVIDIA H100 PCIe GPU operating in CC mode. We deploy \sys inside a confidential VM (CVM), with the GPU configured in CC mode and passed through directly to the VM. 

\vspace{4pt}\noindent{\bf Datasets.} Unless otherwise specified, we use the same datasets as \textsc{Compass}, including \textsc{Laion100K}, \textsc{Sift1M}, \textsc{TripClick}, and \textsc{MSMarco}. Table~\ref{tab:datasets} summarizes their key characteristics. To construct our linear table representation, we implement a simple converter that takes graphs generated by \texttt{hnswlib}~\cite{hnswlib}, a widely-used HNSW library, and transforms them into our format. As shown in Table~\ref{tab:datasets}, our linear representation is smaller than the original \texttt{hnswlib} format, mainly because it eliminates pointer-based structures.

\begin{table}[tb]
\caption{Dataset Summary}
\label{tab:datasets}
\centering
\scalebox{0.8}{
\begin{tabular}{l r r r r r r}
\toprule
\textbf{Dataset} & \textbf{$N$} & \textbf{$M$} & \textbf{$d$} & \textbf{Size (\texttt{hnswlib})} & \textbf{Size (LTb)} & \textbf{\# Queries} \\
\midrule
LAION  & 100K  & 64  & 512 & 247\,MB & 210\,MB &  1,000\\
SIFT-1M     & 1M    & 64  & 128 & 996\,MB & 637\,MB &  10,000\\
TripClick   & 1.5M  & 128 & 768 & 5.9\,GB & 4.7\,GB &  1,175\\
MSMarco     & 8.8M  & 128 & 768 & 34\,GB  & 27\,GB  &  6,980 \\
\bottomrule
\end{tabular}
}
\end{table}

\subsection{End-to-End Performance Benchmark}\label{sec:main-exp}

\vspace{4pt}\noindent{\bf Comparison with \textsc{Compass}.} We first compare \sys with \textsc{Compass} in an end-to-end benchmark. Specifically, we run semantic queries on all four datasets and tune the recall (e.g., by adjusting $ef$ in HNSW) to reach a target threshold. We then measure the average query time. Figure~\ref{fig:endtoend} shows the results.
\begin{figure}[tb]
    \centering
\includegraphics[width=0.9\linewidth]{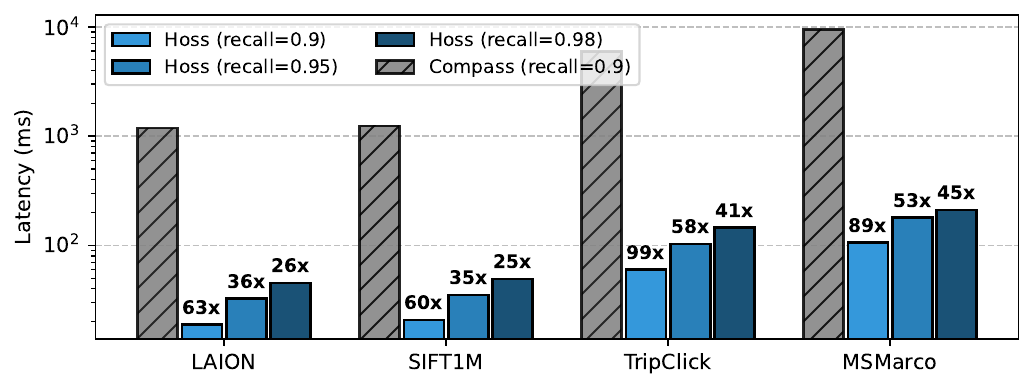}
    \caption{\re{End-to-end query performance comparison.}}
    \label{fig:endtoend}
    \vspace{-1em}
\end{figure}

From Figure~\ref{fig:endtoend}, we see that \sys outperforms \textsc{Compass} across all settings, \re{achieving at least $60\times$ speedup at 0.9 recall and up to $99\times$.} We also observe that the performance gap widens at larger scales (e.g., \textsc{TripClick} and \textsc{MSMarco}). \textsc{Compass} only reports results up to 0.9 recall, while \sys can operate at higher accuracy levels. We therefore also evaluate \sys at 0.95 and 0.98 recall by increasing the HNSW parameter $ef$. Even at 0.98 recall, \sys still achieves significant speedups over \textsc{Compass} at 0.9 recall, for example, \re{up to $45\times$ on \textsc{MSMarco}.} This shows significant improvements of \sys over the SOTA oblivious semantic search system. 

Since \sys and \textsc{Compass} run on different hardware, we also report normalized slowdowns to isolate the cost of oblivious primitives (Figure~\ref{fig:breakdown}). Each system is normalized to its own non-private baseline, which factors out hardware effects and focuses on oblivious overheads. The \sys baseline uses a GPU kernel that preserves the same HNSW logic and offloading as \sys, but accesses data directly without CORAM primitives. For \textsc{Compass}, we implement the same HNSW algorithm on CPU and use it as the baseline.
\begin{figure}[tb]
    \centering
\includegraphics[width=0.9\linewidth]{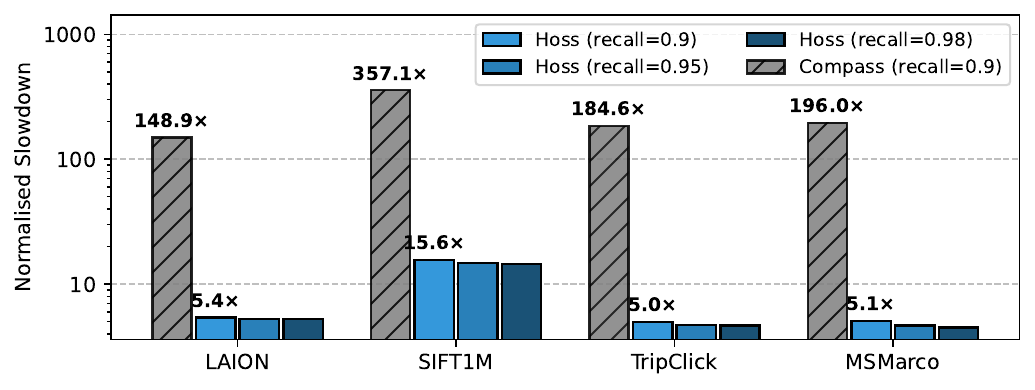}
    \vspace{-.5em}
    \caption{\re{Normalized slowdown comparison.}}
    \label{fig:breakdown}
    \vspace{-.5em}
\end{figure}

We observe that \textsc{Compass} exhibits substantially larger normalized slowdowns across all groups, often exceeding 100$\times$ and \re{up to 357$\times$}, whereas \sys remains below 6$\times$ in most cases. For both systems, SIFT1M shows the largest slowdown. This is due to its lower dimensionality (128 vs.\ 768), which reduces data movement cost and makes the ORAM overhead, such as remapping and position lookup, more prominent. In higher-dimensional datasets, data movement dominates execution time in both the private system and its baseline, so the additional cost of ORAM contributes a smaller fraction, leading to lower normalized slowdowns. This also explains the trend in \sys at higher recall levels. As recall increases, both \sys and its non-private baseline become dominated by data movement and copying. The relative impact of ORAM overhead therefore decreases, resulting in slightly lower normalized slowdowns.


\vspace{4pt}\noindent{\bf Memory usage experiments.} We next examine the memory usage of \sys. Most components are fixed and can be reported directly. The stash, however, is dynamic: its size depends on accesses but is theoretically bounded with high probability. To validate this in practice, we run a stress experiment that performs a large number of CORAM accesses (10$\times$ of data sizes) and tracks the maximum stash size over time (Figure~\ref{fig:stash}).  We then report the max stash usage together with static size components (e.g., position maps, host storage, etc.) to provide a complete memory breakdown (Table~\ref{tab:memory}).
\begin{figure}[tb]
    \centering
\includegraphics[width=0.9\linewidth]{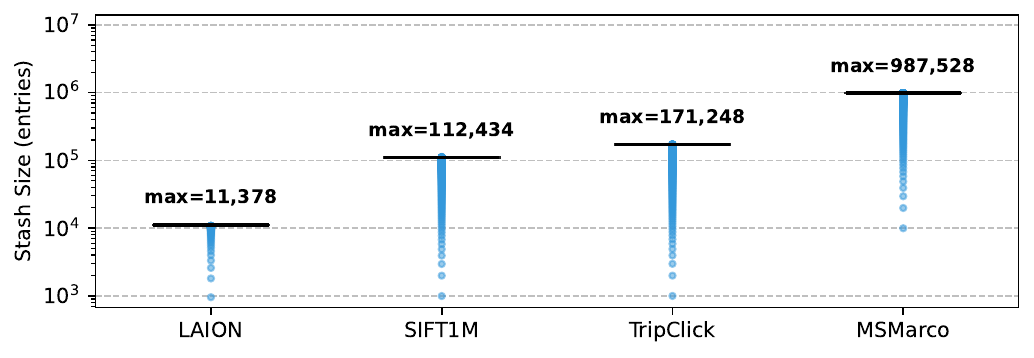}
    \caption{Stash sizes over 10$\times$ data\_size accesses.}
    \label{fig:stash}
    \vspace{-1em}
\end{figure}

\begin{table}[tb]
\caption{Complete memory breakdown}
\label{tab:memory}
\centering
\scalebox{0.85}{
\begin{tabular}{l r r r r r}
\toprule
\textbf{Dataset} & \textbf{Pos.\ Map} & \textbf{Buckets} & \textbf{Max Stash} & \textbf{Total} & \textbf{GPU \%} \\
\midrule
LAION  & 0.4\,MB  & 440.2\,MB  & 25.0\,MB  & 465.6\,MB  & 5.5\% \\
SIFT-1M     & 3.8\,MB  & 1.44\,GB   & 82.8\,MB  & 1.52\,GB   & 5.6\% \\
TripClick   & 5.8\,MB  & 10.18\,GB  & 586.0\,MB & 10.76\,GB  & 5.4\% \\
MSMarco     & 33.7\,MB & 59.09\,GB  & 3.30\,GB  & 62.42\,GB  & 5.3\% \\
\bottomrule
\end{tabular}
}
\end{table}

Figure~\ref{fig:stash} shows that stash sizes remain bounded even under sustained stress accesses. Throughout the entire run, the maximum stash size stays well below $12\%$ of the total data entries (raw data). From the full memory breakdown (Table~\ref{tab:memory}), we observe that GPU memory (position map + maximum stash) accounts for only about $5.5\%$ of total storage. This is because host blocks are $2\times$ over-provisioned, with half of each block occupied by dummy entries. Overall, \sys uses GPU memory efficiently, requiring only a small \pmm footprint to support large datasets.

Our machines have limited host memory, which prevents us from conducting stress tests to evaluate the extreme scalability that \sys can support. However, based on the observed memory characteristics, we can project its scalability and provide guidance for modern data center environments, where servers may be equipped with very large host memory (e.g., exceeding 4TB~\cite{itpro_dl340_gen12_2026}). We assume the same bucket size (e.g., 16 entries), so the GPU memory fraction remains stable at around $5.5\%$ (as by Theorem~\ref{thm:stash-sz}). We then estimate the raw data size as half of the host bucket region to quantify the max data size supported under full GPU \pmm utilization. The projection figures are in Table~\ref{tab:projection}. We can see that, with full GPU \pmm utilization and higher-end GPUs such as GH200-NVL, \sys has potential to scale to terabytes semantic data.
\begin{table}[tb]
\caption{Projected capacity under full GPU HBM usage. We select GPUs that support GPU TEE mode.}
\label{tab:projection}
\centering
\scalebox{0.9}{
\begin{tabular}{l r r r}
\toprule
\textbf{Platform} & \textbf{GPU HBM} & \textbf{Host} & \textbf{Raw Data} \\
\midrule
A100~\cite{nvidia_a100}      & 40\,GB  & 694\,GB   & 347\,GB \\
H100~\cite{nvidia_h100}      & 80\,GB  & 1.39\,TB  & 694\,GB \\
H200~\cite{nvidia_h200}      & 141\,GB & 2.45\,TB  & 1.22\,TB \\
GH200 NVL2~\cite{nvidia_gh200} & 288\,GB & 5.00\,TB  & 2.50\,TB \\
\bottomrule
\end{tabular}
}
\end{table}

\subsection{CORAM Micro-benchmark}\label{sec:microbench}
In this section, we conduct microbenchmarks to evaluate the performance of our CORAM design and compare it with the SOTA ORAM system, \textsc{Bolt}, which also leverages \pmm. 

\vspace{4pt}\noindent{\bf Comparison with \textsc{Bolt}.} We use the same benchmark setup as \textsc{Bolt}, which includes two sets of experiments. The first varies the number of data entries while keeping the tuple size fixed, and measures the average ORAM access latency. The second fixes the dataset size and varies the value size. For consistency, we use the same data format and scale settings as \textsc{Bolt}, where each ORAM entry is an address–value pair with a 4B address and an 8B value by default. Figure~\ref{fig:bolt} reports the results of both experiments.
\begin{figure}[tb]
    \centering
\includegraphics[width=.99\linewidth]{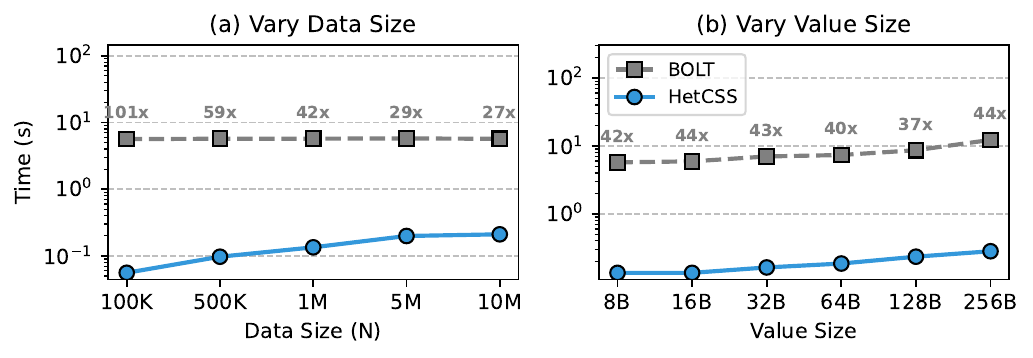}
    \caption{Performance comparison with BOLT.}
    \label{fig:bolt}
    \vspace{-1em}
\end{figure}

Figure~\ref{fig:bolt}(a) shows the results when varying the number of data entries. Across all settings, \sys consistently outperforms \textsc{Bolt}, achieving up to $101\times$ speedup. This gain comes primarily from our linear table layout, which allows direct position map lookups, while \textsc{Bolt} relies on a hash table mechanism~\cite{guo2025bolt}. 

We observe that the performance gap narrows as the dataset size increases. This is expected, as \textsc{Bolt} achieves an asymptotic bandwidth cost of $O(\log\log N)$, while ours is $O(\tfrac{\log N}{\log\log N})$. Nevertheless, \textsc{Bolt}'s design does not map well to GPU execution, and \sys still maintains a substantial advantage, with speedups of up to $27\times$.

Figure~\ref{fig:bolt}(b) shows the value scaling experiment. Here, we can see that across all settings, \sys delivers consistent improvements over \textsc{Bolt}, with speedups of up to $44\times$.

\vspace{4pt}\noindent{\bf Performance gains from coalescing.} A key design contribution of CORAM is its coalescing mechanism. \re{To isolate its impact, we implement two ablation variants: one disables coalescing while preserving batched execution, and the other further disables batching, issuing all accesses sequentially. We compare these two variants with the full CORAM design in terms of access latency.} We use the same two scaling experiments as before, but with data that better reflects semantic workloads. Unless otherwise specified, we use a dataset with 1M entries and 128-dimensional embeddings. In the value-scaling experiment, we vary the embedding dimension from 32 to 768. Figure~\ref{fig:batch} shows the results.
\begin{figure}[tb]
    \centering
\includegraphics[width=.99\linewidth]{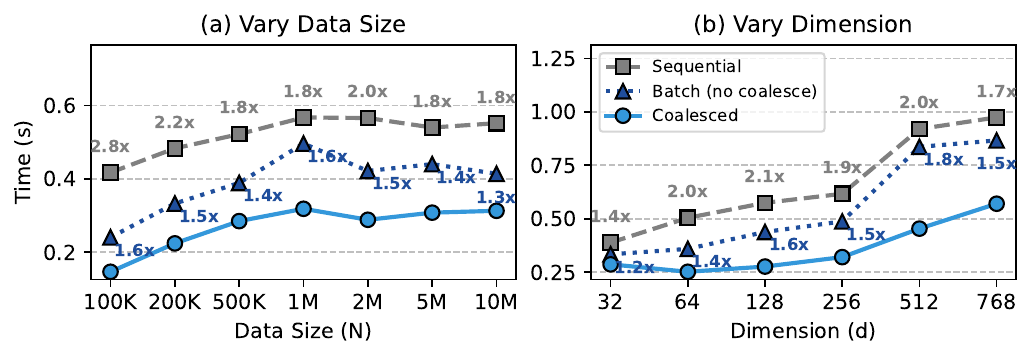}
    \caption{\re{Performance comparison of different ORAM modes.}}
    \label{fig:batch}
    \vspace{-1em}
\end{figure}

From both figures, we see that CORAM consistently outperforms the both variants. \re{The speedup reaches up to $2.8\times$ over fully sequential, and $1.8\times$ over batched mode. Comparing CORAM with the batched variant without coalescing, we observe that the benefit of coalescing becomes more pronounced as the embedding dimension increases. This is expected because larger embeddings incur higher I/O costs, allowing coalescing to eliminate more redundant memory accesses and thus achieve greater performance gains.}

\subsection{Offloading Mode Experiments}\label{sec:off}
In this section, we evaluate \sys under different offloading strategies. In addition to offloading only the final layer, we progressively offload more layers to the host, up to the extreme case where only metadata (e.g., position map and stash) reside in the GPU TEE's \pmm. This experiment requires reconfiguring the system and storage layouts across settings. To keep it manageable, we focus on the \textsc{Sift1M} workload, which is sufficient to capture the trends under different offloading strategies. Figure~\ref{fig:offload} shows the performance under these configurations.
\begin{figure}[tb]
    \centering
\includegraphics[width=0.9\linewidth]{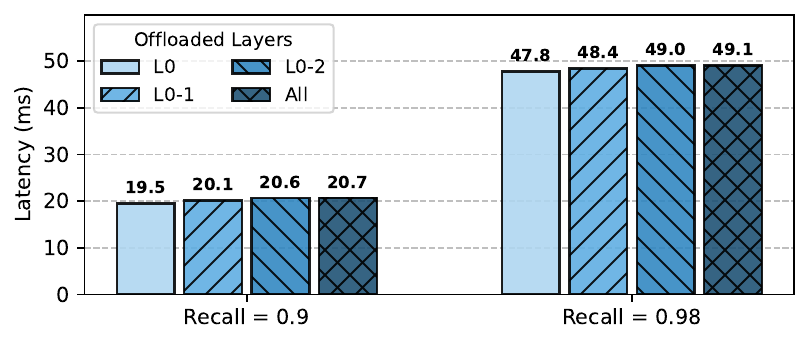}
    \caption{Offloading experiments.}
    \label{fig:offload}
\end{figure}

From the figure, offloading additional layers introduces extra overhead and increases query latency. However, the increase is modest compared to only L0 offloading. For example, in the recall = 0.9 setting, offloading all layers adds only 6\% overhead relative to offloading L0 alone.
This behavior is expected for two reasons. First, HNSW layers shrink exponentially toward the top, so while offloading more layers increases ORAM accesses, the incremental cost per layer is exponentially smaller. Second, HNSW performs far fewer node accesses in upper layers: these layers use simple greedy traversal with small candidate sets, whereas the final layer relies on beam search and explores many more nodes.
This also explains why at recall = 0.98 the additional overhead is even smaller: achieving higher recall makes the final-layer search more expensive, amortizing the cost of offloading upper layers. Overall, offloading more layers leads to only minor performance degradation.
\section{Discussion}\label{sec:discussion}
We briefly discuss extensions enabled by \sys to illustrate its broader applicability. A full exploration is beyond the scope of this paper and left to future work.

\vspace{4pt}\noindent{\bf Volume and timing hiding.} Standard obliviousness definitions~\cite{path-oram,goldreich1987towards,ren2015constants,resizable-tree-based-oram} focus on making access sequences indistinguishable when they have the same length. This means they do not capture timing or volume leakage~\cite{kellaris2016generic,cash2015leakage,blackstone2019revisiting,chuang2026tee,grubbs2018pump,oya2021hiding}, where different queries naturally lead to different amounts of work. In practice, these leakages are often handled with separate, orthogonal techniques~\cite{amjad2021dynamic,wang2021dp,wang2022incshrink,bater2018shrinkwrap,patel2019mitigating}. \sys follows the standard definition of obliviousness, but can also be extended to address timing and volume leakage. Intuitively, for the upper-layer searches running on the GPU, the access patterns are hidden, so volume leakage is not a concern, but timing can still vary. A possible fix is to pad execution to a fixed upper bound so that all queries take the same time before moving to the final layer. For offloaded layers, volume leakage becomes more visible. For instance, in HNSW beam search~\cite{malkov2018efficient}, nodes visited are skipped, which creates data-dependent volume patterns. One way to handle this is to always fetch candidate nodes and do the filtering inside the GPU, so externally the behavior looks the same. Similarly, we can remove early stopping and instead run a fixed number of exploration steps to smooth out timing differences.

\vspace{4pt}\noindent{\bf Inter-query parallelism.} Inter-query parallelism~\cite{ganguly1992query} is a appealing feature in many retrieval systems and modern database engines. Unlike intra-query parallelism, which focuses on accelerating a single query as in \sys, this approach executes multiple, potentially heterogeneous queries concurrently to improve overall throughput. This design is particularly attracting for GPU-based systems, where a single query often cannot fully utilize the available parallel resources~\cite{douze2024faiss, chen2021spann}. However, supporting inter-query parallelism is fundamentally challenging in ORAM contexts, as na\"ive parallelism can easily violate security guarantees~\cite{chakraborti2018concuroram,asharov2022optimal,lorch2012toward}. The coordination logic usually needs carefully design and made oblivious as well~\cite{chakraborti2018concuroram}.

That said, the GPU’s \pmm opens up new opportunities. As shown in our CORAM design, parallel accesses can be efficiently coordinated within \pmm. While our current setting does not consider identical accesses within a batch (e.g., the same node), the design can be extended to support general parallel processing. In particular, when such conflicts arise, we can internally replay the ORAM logic: issuing a single real access with remapping, while serving the remaining requests as randomized dummy accesses. This preserves ORAM invariants while enabling batched execution, providing a foundation for efficient inter-query parallelism.

\vspace{4pt}\noindent{\bf Adapting to other oblivious systems.} Although \sys is designed for HNSW-based semantic search, its core ideas are not tied to a specific algorithm or workload. In particular, the CPU-bypassed execution model and GPU-assisted ORAM are general abstractions that apply beyond HNSW and even beyond semantic search. For example, \sys can support other graph-based algorithms~\cite{ootomo2024cagra,shi2018graph}, as long as the graph can be expressed in our linear table representation. Adapting to a new graph workload primarily requires changing the access interface while reusing the same ORAM backend. More broadly, the ORAM abstraction itself is not specialized to graphs, but rather a batched key-value access interface, making it straightforward to map a wide range of algorithms onto it. Beyond graph workloads, these abstractions naturally extend to other oblivious systems, such as relational databases~\cite{eskandarian19obliDB,zheng2017opaque,special2024} and time-series databases~\cite{dauterman2022waldo,faisal2023tva}. Overall, \sys provides a general and reusable foundation for building high-performance oblivious systems across diverse application domains.

\re{\vspace{4pt}\noindent{\bf Dynamic index support.}
Although this work focuses on secure search over a pre-built HNSW index, CORAM is not limited to static indexes. Dynamic index-management primitives naturally build on the same search primitive. We assume the ORAM storage is pre-allocated with sufficient free space for future growth. For example, insertion first performs an oblivious search to identify the insertion neighbors, followed by a bounded number of graph updates~\cite{malkov2018efficient}. Deletion can leverage the standard lazy-deletion mechanism~\cite{hnswlib} by obliviously marking a node as deleted, and the reclaimed space can be reused by subsequent insertions. If the reserved ORAM capacity is eventually exhausted due to continuous growth, the data owner can periodically rebuild the index.}

\section{Related Work}
\label{sec:related}
\noindent{\bf ORAM and oblivious data systems.}
ORAMs has long been the canonical abstraction for hiding memory-access patterns on untrusted storage~\cite{goldreich1987towards,path-oram,ren2015constants,asharov2020optorama,asharov2023futorama, goldreich1996software, zheng2024h, tinoco2023enigmap, asharov2022optimal, pinkas2010oblivious, bindschaedler2015practicing, wang2014oblivious}. A parallel line of work developed the oblivious algorithms with which execution's memory accesses do not depend on inputs, which covers sorting networks~\cite{batcher1968sorting,ajtai19830,goodrich2014zig, asharov2020bucket}, shuffle~\cite{ghazi2019scalable,sasy2022fast} and more advanced data processing~\cite{dittmer2020oblivious, special2024, wang2021dp, wang2022incshrink, bater2018shrinkwrap, chu2021differentially, qin2022adore, chang2022towards}. As obliviousness moved from cryptographic theory into systems, in recent years, there has made a series end-to-end data processing systems that delivers strigent obliviousness guarantees, such as \textsc{ZeroTrace}~\cite{sasy18zerotrace}, \textsc{Oblix}~\cite{oblix}, \textsc{ObliDB}~\cite{eskandarian19obliDB}, and many more~\cite{chamani2023graphos, zheng2024h,guo2025bolt,Ahmed2025oasisdb, qin2022adore, qiudoquet}. \textsc{Compass}~\cite{zhu2025compass} is the SOTA oblivious semantic search system that to achieve high-accuracy encrypted semantic search~\cite{zhu2025compass}. Nevertheless, these works rely on the traditional assumption of only an $O(1)$-sized \pmm, or require a trusted proxy or client~\cite{asharov2022optimal, zhu2025compass}. As a result, they remain subject to classical ORAM lower bounds and the associated performance overheads.

\vspace{4pt}\noindent{\bf Secure memory hardware.} Due to the high cost of ORAM and other oblivious primitives, prior work has explored secure memory hardware~\cite{aga2017invisimem, awad2017obfusmem, oh2020trustore, duy2022se, choi2024shieldcxl, guo2025bolt, xu2019hermetic} that conceals memory channels and directly hides access patterns without relying on oblivious primitives. However, these designs typically target limited-capacity memory resources, such as registers~\cite{xu2019hermetic}, BRAMs~\cite{oh2020trustore, awad2017obfusmem, aga2017invisimem}, or specialized memory nodes~\cite{duy2022se, choi2024shieldcxl}. As a result, they are either not general-purpose or provide only limited capacity. More recent work moves to on-package HBM to hide access patterns, leading to a new class of systems~\cite{volos2018graviton, hunt2020telekine, guo2025bolt} that are significantly faster than traditional oblivious designs. \sys follows this direction by leveraging HBM as \pmm, while addressing unique challenges in supporting semantic searches through significant co-design. To our knowledge, this is the first work of its kind.


\vspace{4pt}\noindent{\bf Accelerator TEEs.} TEE designs have traditionally focused on CPUs, with prominent examples including Intel SGX~\cite{costan2016intel}, AMD SEV~\cite{amd_sev}, and Arm TrustZone~\cite{pinto2019demystifying}. With the rise of accelerators as a central component in modern data centers, recent work has expanded TEE support to these platforms. In particular, GPU-based TEEs have emerged as the dominant direction~\cite{zhao2025gpu, mai2023honeycomb, gu2026blueprint, deng2022strongbox, vaswani2022confidential, wang2026sok, wu2023building, chrapek2026secperf}, with production deployments such as NVIDIA’s confidential computing offerings~\cite{nvidia2025secureai}. FPGA-based TEEs~\cite{guo2025bolt, zhao2022shef, armanuzzaman2022byotee} target specialized low-latency confidential workloads, while ASIC-based TEEs~\cite{vaswani2022confidential, vaswani2023confidential, dhar2024ascend} integrate security directly into neural processing units for protecting ML pipelines. These works primarily ensure the confidentiality of data values. However, they do not directly address leakage through data-dependent execution, where memory accesses or control flow reveal sensitive information. \sys builds on accelerator-based TEEs and complements them with mechanisms that rigorously control data-dependent behavior, thereby reducing leakage from access patterns.



\section{Conclusion}
\label{sec:conclusion}

Oblivious semantic search faces a fundamental tension between strong privacy and practical performance. We revisit this tension in the context of GPU TEEs and show that large on-package HBM opens a new path. By treating HBM as a larger \pmm, we reduce reliance on expensive ORAM operations. Guided by this insight, we build \sys, a heterogeneous TEE-based system that combines GPU and CPU enclaves for efficient, scalable, and secure HNSW-based retrieval. The design confines data-dependent execution to GPU-resident \pmm and uses a redesigned ORAM interface for host memory to protect data contents and access patterns without prohibitive overhead. \sys shows that strong security can coexist with high performance and scalability.
\bibliographystyle{ACM-Reference-Format}
\bibliography{ccs-sample}

\appendix 

\section{Open Science} 


\label{app:open-science}

The artifacts necessary for evaluating the contributions of \sys are available in an anonymous repository:
\url{https://anonymous.4open.science/r/hetcss-7758/}.
The repository includes the implementation, experiment scripts, benchmark workloads, configuration files, dataset preparation scripts, plotting scripts, and instructions for reproducing the main experimental results reported in the paper.

\section{Ethical Considerations}

This work proposes Hoss, a privacy-enhancing system for oblivious semantic search using heterogeneous CPU--GPU TEEs. The goal of the system is to protect outsourced embedding datasets and query-dependent access patterns from an untrusted cloud software stack. Therefore, the intended impact of the work is positive, particularly for sensitive search workloads such as biomedical retrieval, enterprise search, and private retrieval-augmented generation.

This work does not involve human subjects, user studies, or the collection of new personal data. Our evaluation uses benchmark datasets and system-level performance measurements, and we do not collect or analyze private user queries. The work also does not attack deployed third-party systems or disclose new vulnerabilities. The adversarial analysis is limited to a clearly specified threat model and is used only to motivate and evaluate a defensive system.

The main ethical risk is possible over-interpretation of the security guarantees. To mitigate this risk, the paper explicitly states its threat model, security goals, and infrastructure assumptions. Hoss assumes standard TEE protections, encrypted inter-TEE communication, and dedicated GPU TEE execution, and it does not claim protection against out-of-scope threats such as availability attacks, malicious inputs, or invasive physical side channels.

Like other privacy-enhancing technologies, Hoss could in principle be used in undesirable applications. However, the techniques presented in this paper are intended to reduce privacy leakage in legitimate outsourced search deployments. We believe the benefits of enabling efficient confidential semantic search outweigh the limited dual-use risks, provided that real-world deployments follow applicable legal, organizational, and access-control requirements.

\section{Generative AI Usage}
The authors used ChatGPT for minor editorial assistance, including grammar checking, spelling correction, and light style polishing. No substantive scientific content, citations, or technical claims were generated by the tool. All AI-assisted edits were manually reviewed and verified by the authors.
\end{document}